\newcommand{\cleqn}{\setcounter{equation}{0}}
\newcommand{\clth}{\setcounter{theorem}{0}}
\newcommand {\sectionnew}[1]{\section{#1}\cleqn\clth}
\newtheorem{theorem}{Theorem}[section]
\newtheorem{lemma}[theorem]{Lemma}
\newtheorem{proposition}[theorem]{Proposition}
\newtheorem{remark}[theorem]{Remark}
\newtheorem{definition}[theorem]{Definition}
\renewcommand{\P}{\mathcal{P}}
\renewcommand{\L}{\mathcal{L}}
\def\({\left(}
\def\){\right)}
\def\[{\begin{eqnarray}}
\def\]{\end{eqnarray}}
\def\d{\partial}
\def\La{\Lambda}
\begin{document}

\title{Constrained lattice-field hierarchies and Toda system with Block symmetry}

\author{
Chuanzhong Li\  } \dedicatory {  Department of Mathematics,  Ningbo University, Ningbo, 315211, China\\
 Email: lichuanzhong@nbu.edu.cn}

\date{}

\maketitle

\begin{abstract}
In this paper, we construct the additional $W$-symmetry and ghost symmetry of two-lattice field integrable hierarchies. Using the symmetry constraint, we construct  constrained two-lattice integrable systems which contain several new integrable difference equations. Under a further reduction, the constrained two-lattice integrable systems can be combined into one single integrable system, namely the well-known one dimensional original Toda hierarchy. We prove that the one dimensional original Toda hierarchy has a nice Block Lie symmetry.
\end{abstract}

Mathematics Subject Classifications(2000).  37K05, 37K10, 37K20.\\
Keywords:  constrained lattice-field integrable hierarchy, symmetry constraint,  additional symmetry, Toda hierarchy, Block Lie algebra.\\

\allowdisplaybreaks
 \setcounter{section}{0}

\sectionnew{Introduction}

The most fundamental integrable models in mathematical physics are the KP system and Toda system.
As one of the most important sub-hierarchies of the KP hierarchy by considering
reductions  on the Lax operator, the
constrained KP hierarchy contains a large number of interesting
 soliton equations under the
so-called symmetry constraint\cite{kss,chengyiliyishenpla1991,
chengyiliyishenjmp1995}. One of the constraints means that the negative part of the Lax operator of the
constrained KP hierarchy
is a generator of the additional symmetries of the KP hierarchy.
The Toda hierarchy is  a completely integrable system which has many important applications in mathematics and physics including the theory of Lie algebras  and so on \cite{UT83,UT}. The Toda system has many kinds of reductions or extensions, for example extended Toda hierarchy\cite{CDZ}, bigraded Toda hierarchy \cite{C}-\cite{ourBlock}  which governs the Gromov-Witten invariant of $CP^1$ and orbifords. In this paper, we will construct two constrained
 lattice-field hierarchies which are similar to two dimensional Toda system but without crossing flows.

Additional symmetries  of the KP hierarchy were given by Orlov and
Shulman \cite{os1} through the Orlov-Shulman operator $M$, which can be used
to form a centerless  $W$ algebra. The generating function of additional $W$-symmetries  of the KP type integrable hierarchy constitutes a squared eigenfunction symmetry or ghost symmetry in terms of wave functions\cite{Aratyn,Jipeng,ghostdisc}.
Because of the universality of additional symmetries for integrable systems, in this paper, we will use the additional symmetry of  two-lattice field hierarchies to do a further reduction. The reduction produces constrained lattice-field hierarchies whose further reduction is the well-known original Toda hierarchy.
The infinite dimensional Lie algebra of Block type is
a generalization of the well-known Virasoro algebra and has
been studied intensively for example in literatures \cite{Block,Su}.  Later we provide  this kind of Block type algebraic structure for the bigraded Toda hierarchy \cite{ourBlock}, dispersionless bigraded Toda
 hierarchy \cite{dispBTH} and D type Drinfeld-Sokolov hierarchy\cite{2bkpds}. In this paper, we will further prove that the  reduced Toda hierarchy recovers the original Toda hierarchy with its additional Block Lie algebra.

\sectionnew{Two-lattice field hierarchies}
         Two-lattice field hierarchies considered in this section are two families of evolution equations depending on
infinitely many variables $x=(x_1,x_2,\cdots),y=(y_1,y_2,\cdots)$ respectively and a difference variable $n$. Basing on the discrete KP hierarchy in \cite{Kupershmidt} and Toda hierarchy in \cite{UT83,UT}, now we consider two-lattice field hierarchies as
\begin{equation}
\frac{\partial L}{\partial x_i}=[B_i, L],\ \ \ B_i:=(L^i)_+,
\end{equation}
\begin{equation}
\frac{\partial \bar L}{\partial y_i}=[\bar B_i, \bar L],\ \ \ \bar B_i:=(\bar L^i)_+,
\end{equation}
where $L,\bar L$ are two general pseudo-shift operators
\begin{equation} \label{laxoperatordkptwo}
L(n)=\Lambda + \sum_{j=0}^{\infty} u_j(n)\Lambda^{-j},
\end{equation}
\begin{equation} \label{laxoperatordkpbar}
\bar L(n)=v_{-1}\Lambda^{-1} + \sum_{j=0}^{\infty} v_j(n)\Lambda^{j}.
\end{equation}
These two-lattice field hierarchies can be treated as one part of the two dimensional Toda lattice hierarchy\cite{UT} without crossing flow equations.
Here the shift operator $\Lambda$ acts on a discrete function $f(n)$ as $\Lambda f(n)=f(n+1).$
Similar to the two dimensional Toda hierarchy\cite{UT}, $L$ and $\bar L$ can also be  dressed by dressing operators $\P$ and $\bar \P$
\[\P(n;x)=1+\sum^\infty_{j=1}a_j(n;x)\Lambda^{-j},
\]
\[\bar \P(n;y)=\bar a_0 +\sum^\infty_{j=1}\bar a_j(n;y)\Lambda^{j},
\]
by
\begin{equation}\label{dressing}
L=\P\circ\Lambda\circ \P^{-1},\ \ \ \bar L=\bar \P\circ\Lambda^{-1}\circ \bar \P^{-1}.
\end{equation}

Define an anti-evolution operator $*$ by: $\Lambda^*=\Lambda^{-1},\ f(n)^*=f(n)$ for an arbitrary function $f(n)$. Also we define two wave functions (adjoint wave functions) $w,\bar w(w^*,\bar w^*)$ as following
\begin{equation}
w=\P e^{\xi(n,x,z)},\ \ w^*=\P^{-1*}
e^{-\xi(n,x,z)},
\end{equation}
\begin{equation}\bar w=\bar \P e^{\bar \xi(n,y,z)},\ \ \bar w^*=\bar \P^{-1*}
e^{-\bar \xi(n,y,z)},
\end{equation}
 where,
 \begin{eqnarray}
 &&\xi(n,x,z)=\sum_{m\geq 0}z^mx_m+n \ln z,\ \ \bar \xi(n,y,z)=-\sum_{m\geq 0}z^{-m}y_m+n \ln z.
\end{eqnarray}

One can find the wave functions $w,\bar w$ and their adjoint wave functions $w^*,\bar w^*$ satisfy the following flow equations
\[\label{satowl}\frac{\partial w}{\partial x_i}&=&L^n_+ w,\ \ \frac{\partial w^*}{\partial x_i}=-L^{n*}_+ w^*,\\
\label{satowr}
\frac{\partial \bar w}{\partial y_i}&=&-\bar L^n_- \bar w,\ \ \frac{\partial \bar w^*}{\partial y_i}=\bar L^{n*}_- \bar w^*.\]

Based on the above dressing structures in eq.\eqref{dressing}, two independent additional symmetries will be given in the next section which will be used to construct the constrained system by symmetry constraints.

\sectionnew{$W$-symmetry and ghost symmetry}

In order to give the additional symmetries of two-lattice field hierarchies, similarly as \cite{asv2}, we can define the Orlov-Schulman's $M_L$, $M_R$
 operators
 by
\begin{eqnarray}\label{Moperator}
&&M_L=\P\Gamma_L \P^{-1}, \ \ \ \ \ \ \ M_R=\bar \P\Gamma_R
\bar \P^{-1},
\end{eqnarray}
where
\begin{eqnarray}
 && \Gamma_L=
n\Lambda^{-1}+\sum_{i\geq 0}(i+1 )
\Lambda^ix_i,\ \ \
\Gamma_R=-n\Lambda-\sum_{i\geq 0}(i+1 )
\Lambda^{-i}y_i.
\end{eqnarray}

Like the additional symmetry of the Toda hierarchy in \cite{asv2}, we are now to recall the definition of  the additional flows, and then to
prove that they are symmetries, which are called additional
symmetries of two-lattice field hierarchies. The additional flows over two sets of
independent variables $t^*_{m,l},\bar t^*_{m,l}$ and their actions on the wave operators are defined  as
\begin{eqnarray}\label{definitionadditionalflowsonphi2}
\dfrac{\partial \P}{\partial
{t^*_{m,l}}}=-\left(M_L^mL^l\right)_{-}\P, \ \ \ \dfrac{\partial
\bar \P}{\partial {\bar t^*_{m,l}}}=\left(M_R^m\bar L^l\right)_{+}\bar \P,
\end{eqnarray}
where $m\geq 0, l\geq 0$.
The additional flows  can be proved to commute
with the flows of the two-lattice field hierarchies, i.e.,
\begin{eqnarray}
[\partial_{t^*_{m,l}}, \partial_{x_{n}}]\Phi=0,\ \ [\partial_{\bar t^*_{m,l}}, \partial_{y_{n}}]\Psi=0,
\end{eqnarray}
where $\Phi$ can be $\P, L$, and $\Psi$ can be $\bar \P$ or $\bar L$,  and
 $
\partial_{t^*_{m,l}}=\frac{\partial}{\partial{t^*_{m,l}}},
\partial_{x_{n}}=\frac{\partial}{\partial{x_{n}}},\ \partial_{y_{n}}=\frac{\partial}{\partial{y_{n}}}$.

The commutative property  means that
additional flows are symmetries of the lattice-field hierarchy.
Since they are symmetries, one can find that the algebraic
structures among these additional symmetries are two separated $W$-algebras which is included in the following known important
proposition.
\begin{proposition}\label{WinfiniteCalgebra}
The additional flows  $\partial_{t^*_{m,l}}$ form a $W$-algebras with the
following relation
 \begin{eqnarray}\label{algebra relation}
[\partial_{t^*_{m,l}},\partial_{t^*_{n,k}}] \P &= C_{m,l,n,k}^{p,q}\partial_{t^*_{p,q}}\P,\ \
[\partial_{\bar t^*_{m,l}},\partial_{\bar t^*_{n,k}}] \bar \P &= C_{m,l,n,k}^{p,q}\partial_{\bar t^*_{p,q}}\bar \P,
\end{eqnarray}
where $C_{m,l,n,k}^{p,q}$ is the coefficients of the $W$-algebra and  $m,n,l,k\geq 0.$
\end{proposition}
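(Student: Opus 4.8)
The plan is to reduce the whole closure statement to one Heisenberg-type commutation relation between the Lax operator and the Orlov--Schulman operator, and then to invoke the standard associative-algebra structure that such a relation generates. First I would establish the canonical relation $[L,M_L]=1$ (and likewise $[\bar L,M_R]=1$). Because of the dressing \eqref{dressing} and the definition \eqref{Moperator}, one has $[L,M_L]=\P[\Lambda,\Gamma_L]\P^{-1}$, so it suffices to compute $[\Lambda,\Gamma_L]$ at the bare level. Using $\Lambda\, n=(n+1)\Lambda$ and the fact that $\Lambda$ commutes with the $x_i$, a one-line computation gives $\Lambda\Gamma_L-\Gamma_L\Lambda=1$, hence $[L,M_L]=1$; the analogous computation with $\Lambda^{-1}n=(n-1)\Lambda^{-1}$ yields $[\bar L,M_R]=1$. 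This single relation is the engine of the proof.

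Next I would record how the generators evolve under the additional flows. Writing $A_{m,l}:=M_L^mL^l$, the definition \eqref{definitionadditionalflowsonphi2} reads $\partial_{t^*_{m,l}}\P=-(A_{m,l})_-\P$, and differentiating the dressing expressions for $L$ and $M_L$ (noting that $\Gamma_L$ is independent of the additional times) gives $\partial_{t^*_{m,l}}L=[-(A_{m,l})_-,L]$ and $\partial_{t^*_{m,l}}M_L=[-(A_{m,l})_-,M_L]$. Since $\partial_{t^*_{m,l}}$ acts as a derivation on products, it follows that $\partial_{t^*_{m,l}}A_{n,k}=[-(A_{m,l})_-,A_{n,k}]$ for all $n,k$.

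With these two ingredients the commutator of flows is a direct computation. Expanding
\begin{eqnarray}
[\partial_{t^*_{m,l}},\partial_{t^*_{n,k}}]\P &=& \partial_{t^*_{m,l}}\big(-(A_{n,k})_-\P\big)-\partial_{t^*_{n,k}}\big(-(A_{m,l})_-\P\big) \nonumber
\end{eqnarray}
and substituting the derivation rule above, all terms that are quadratic in the wave operator reorganize, via the splitting $A=(A)_++(A)_-$ together with the facts that a product of nonnegative-order operators is of nonnegative order and a product of negative-order operators is of negative order, into
\begin{eqnarray}
[\partial_{t^*_{m,l}},\partial_{t^*_{n,k}}]\P &=& \big([A_{m,l},A_{n,k}]\big)_-\P \nonumber
\end{eqnarray}
up to the sign fixed by the flow convention. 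The decisive remaining point is that $[A_{m,l},A_{n,k}]=[M_L^mL^l,M_L^nL^k]$ closes linearly on the generators. Because $[L,M_L]=1$, the associative algebra generated by $L$ and $M_L$ is isomorphic to the algebra of differential operators in one variable (with $M_L$ the coordinate and $L$ the derivative), so the commutator of two monomials is again a finite linear combination $C_{m,l,n,k}^{p,q}M_L^pL^q=C_{m,l,n,k}^{p,q}A_{p,q}$, the coefficients being exactly those of the $w_\infty$ Lie algebra. Feeding this back and using $(A_{p,q})_-\P=-\partial_{t^*_{p,q}}\P$ yields \eqref{algebra relation}, and the barred identity follows verbatim from $[\bar L,M_R]=1$.

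I expect the main obstacle to be the projection bookkeeping in the commutator step: one must check carefully that the cross terms $(A_{n,k})_-(A_{m,l})_-$ and the projected brackets $([(A_{m,l})_-,A_{n,k}])_-$ telescope precisely to $([A_{m,l},A_{n,k}])_-$, which relies on the order properties of the two projections and on $([(A_{m,l})_-,(A_{n,k})_-])_+=0$. The closure in the last step is structurally identical to the KP/Toda additional symmetries and, as the statement indicates, is regarded as known; the only genuinely new computation for the present hierarchies is the Heisenberg relation of the first step.
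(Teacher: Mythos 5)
Your proposal is correct and follows essentially the paper's own route: the paper states this proposition without proof as a known fact, but your argument --- the Heisenberg relations $[L,M_L]=[\bar L,M_R]=1$ obtained by dressing $[\Lambda,\Gamma_L]=[\Lambda^{-1},\Gamma_R]=1$, the derivation rule $\partial_{t^*_{m,l}}A_{n,k}=[-(A_{m,l})_-,A_{n,k}]$, and the telescoping of the projected cross terms to $\bigl([A_{m,l},A_{n,k}]\bigr)_-\P$ followed by Weyl-algebra closure --- is precisely the computation the paper itself carries out for the analogous Block-symmetry Theorem 5.2 and Proposition 5.1. The sign you flag is harmlessly absorbed into the unspecified structure constants $C_{m,l,n,k}^{p,q}$, so nothing is missing.
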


Similar as the method in \cite{asv2}, two generating functions of independent additional
symmetries can be constructed as following
\begin{eqnarray}\label{generatingfunction}
Y(\lambda, \mu)
 &=&\sum\limits_{m=0}^{\infty}\dfrac{(\mu-\lambda)^m}{m!}
\sum\limits_{l=-\infty}^{\infty}\lambda^{-l-m-1}
(M_L^mL^{l+m})_{-},
\end{eqnarray}
\begin{eqnarray}\label{generatingfunction*}
\bar Y(\lambda,
\mu)&=&\sum\limits_{m=0}^{\infty}\dfrac{(\mu-\lambda)^m}{m!}
\sum\limits_{l=-\infty}^{\infty}\lambda^{-l-m-1}
(M_R^mL^{l+m})_{+},
\end{eqnarray}
which can be expressed by a simple form in the sequent
proposition. To this end, we need several well known and useful
techniques in the following several lemmas.
Here we define $res_{_{\lambda}} \sum_{i=-\infty}^{+\infty}A_i \lambda^i=A_{-1},\ res_{_{\Lambda}} \sum_{i=-\infty}^{+\infty}A_i \Lambda^i=A_0,$
which will be used in the following lemma.
\begin{lemma}
For two pseudo-shift operators $P$ and $Q$, the identities
\begin{equation}\label{PQidentity}
res_z[(Pe^{\xi(n,x,z)})(Q(n-1)
e^{-\xi(n,x,z)})]=res_{_{\Lambda}}[P Q^*],
\end{equation}
\begin{equation}\label{PQidentity*}
res_z[(Pe^{\bar \xi(n,y,z)})( Q(n-1)
e^{-\bar \xi(n,y,z)})]=res_{_{\Lambda}}[PQ^*]
\end{equation}
 hold true.
\end{lemma}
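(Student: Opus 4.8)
The plan is to prove both identities by direct expansion, reducing each side to the same explicit bilinear form in the coefficients of $P$ and $Q$; the statement is entirely formal once the lattice shifts are tracked. Writing $P=\sum_i p_i(n)\Lambda^i$ and $Q=\sum_j q_j(n)\Lambda^j$, both sides are bilinear in $P$ and $Q$, so it suffices to treat monomials $P=p_i(n)\Lambda^i$, $Q=q_j(n)\Lambda^j$ and then sum. Two elementary facts do all the work: first, that $e^{\pm\xi(n,x,z)}$ is an eigenfunction of the shift, $\Lambda^j e^{\pm\xi(n,x,z)}=z^{\pm j}e^{\pm\xi(n,x,z)}$, which is exactly the effect of the $n\ln z$ term in $\xi$, together with $e^{\xi}e^{-\xi}=1$; and second, the reordering rule $\Lambda^i f(n)=f(n+i)\Lambda^i$ for difference operators.

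For the left-hand side of \eqref{PQidentity} I would compute $Pe^{\xi}=\big(\sum_i p_i(n)z^i\big)e^{\xi}$ and, inserting the prescribed shift of the coefficients of $Q$ to $n-1$, $Q(n-1)e^{-\xi}=\big(\sum_j q_j(n-1)z^{-j}\big)e^{-\xi}$. Multiplying, the exponentials cancel and there remains the Laurent series $\sum_{i,j}p_i(n)q_j(n-1)z^{i-j}$; applying $res_z$ selects the anti-diagonal $j=i+1$ and returns $\sum_i p_i(n)q_{i+1}(n-1)$. For the right-hand side I would use that $*$ is an anti-automorphism with $\Lambda^*=\Lambda^{-1}$ and $f(n)^*=f(n)$, so that $Q^*=\sum_j\Lambda^{-j}q_j(n)=\sum_j q_j(n-j)\Lambda^{-j}$; composing with $P$ and reordering via $\Lambda^i f(n)=f(n+i)\Lambda^i$ gives $PQ^*=\sum_{i,j}p_i(n)q_j(n+i-j)\Lambda^{i-j}$. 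Taking $res_\Lambda$ picks out precisely the term matching the one selected on the left: along the relevant diagonal $q$ is evaluated at $n+i-j=n-1$, so the same bilinear sum $\sum_i p_i(n)q_{i+1}(n-1)$ reappears, which establishes \eqref{PQidentity}.

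The barred identity \eqref{PQidentity*} needs no separate computation. Since $\bar\xi(n,y,z)$ contains the same $n\ln z$ term, one again has $\Lambda^j e^{\pm\bar\xi}=z^{\pm j}e^{\pm\bar\xi}$ and $e^{\bar\xi}e^{-\bar\xi}=1$, the $y$-dependence carried by $\sum_m z^{-m}y_m$ dropping out of the product; the right-hand side $res_\Lambda[PQ^*]$ is literally the same object, so the two calculations coincide verbatim. The only genuinely delicate point — and the step I would check most carefully on monomials before summing — is the bookkeeping of the lattice shifts: the evaluation of $Q$ at $n-1$ is exactly the device that makes the argument $n+i-j$ produced by the $*$-operation collapse onto the argument appearing on the left, so that the coefficient extracted by $res_z$ coincides with the one extracted by $res_\Lambda$. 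Everything else is routine Laurent-series algebra.
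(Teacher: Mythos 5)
Your proof is correct and takes essentially the same route as the paper's: expand $P=\sum_i p_i(n)\Lambda^i$, $Q=\sum_j q_j(n)\Lambda^j$, use $\Lambda^j e^{\pm\xi}=z^{\pm j}e^{\pm\xi}$ to collapse the left side to $\sum_i p_i(n)q_{i+1}(n-1)$, and normal-order $PQ^{*}$ to recover the identical bilinear sum on the right (and the barred case is verbatim the same, exactly as the paper asserts). One small point worth keeping in mind: your last step, like the paper's own computation and its later formulas for $P_{\pm}$, tacitly reads $res_{\Lambda}$ as the coefficient of $\Lambda^{-1}$ (with coefficients written on the left), which is the convention under which the lemma is true, even though the paper's displayed definition literally says $res_{\Lambda}\sum A_i\Lambda^{i}=A_{0}$ --- evidently a typo there rather than a gap in your argument.
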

\begin{proof} We suppose \begin{eqnarray}
P&=&\sum_{i=-\infty}^{\infty}p_i\Lambda^i,\ \ Q=\sum_{j=-\infty}^{\infty}q_j\Lambda^j.
\end{eqnarray}
Then we get
\begin{eqnarray*}\label{PQidentity'}
&&res_z[(Pe^{\xi(n,x,z)})(Q(n-1) e^{-\xi(n,x,z)})]\\
&=&res_z[(\sum_{i=-\infty}^{\infty}p_i\Lambda^ie^{\xi(n,x,z)})(\sum_{j=-\infty}^{\infty}q_j(n-1)\Lambda^j
e^{-\xi(n,x,z)})]\\
&=&res_z[\sum_{i=-\infty}^{\infty}p_iz^{i}\sum_{j=-\infty}^{\infty}q_j(n-1)z^{-j}]\\
&=&\sum_{i-j=-1}p_i(n)q_j(n-1),
\end{eqnarray*}
and
\begin{eqnarray*}\label{PQidentity''}
&&res_{_{\Lambda}}[P Q^*]\\
&=&res_{_{\Lambda}}[\sum_{i=-\infty}^{\infty}p_i\Lambda^i(\sum_{j=-\infty}^{\infty}q_j\Lambda^j)^*]\\
&=&res_{_{\Lambda}}[\sum_{i=-\infty}^{\infty}\sum_{j=-\infty}^{\infty}p_i\Lambda^i\Lambda^{-j}q_j]\\
&=&\sum_{i-j=-1}p_i(n)q_j(n-1),
\end{eqnarray*}
which will finish the proof of the eq.\eqref{PQidentity}.
For the proof of the eq.\eqref{PQidentity*}, one can derive it by similar calculations.

\end{proof}
After this, the following two lemmas on residues can also be frequently used.
\begin{lemma} If $f(z)=\sum\limits_{-\infty}^\infty a_iz^{-i}$, then
\begin{equation}\label{deltafunction}
{\mathrm res}_z[\zeta^{-1}(1-z/\zeta)^{-1}+z^{-1}(1-\zeta/z)^{-1}]
f(z)=f(\zeta).
\end{equation}
 (Here $(1-z/\zeta)^{-1}$ is understood as a
series in $\zeta^{-1}$ while $(1-\zeta/z)^{-1}$ is a series in
$z^{-1}$.)
\end{lemma}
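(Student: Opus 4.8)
The plan is to recognize the bracketed operator as the formal delta kernel $\delta(z,\zeta)=\sum_{k\in\Z}z^{k}\zeta^{-k-1}$ and to use its reproducing property under $\mathrm{res}_z$. First I would expand each rational factor as a geometric series in the region fixed by the parenthetical convention. Because $(1-z/\zeta)^{-1}$ is read as a series in $\zeta^{-1}$, one gets $\zeta^{-1}(1-z/\zeta)^{-1}=\sum_{k\geq 0}z^{k}\zeta^{-k-1}$, whereas $(1-\zeta/z)^{-1}$, being a series in $z^{-1}$, gives $z^{-1}(1-\zeta/z)^{-1}=\sum_{k\geq 0}\zeta^{k}z^{-k-1}$.

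Next I would reindex the second sum by setting $m=-k-1$, so that $k=0,1,2,\dots$ corresponds exactly to $m=-1,-2,-3,\dots$; the $k$-th term $\zeta^{k}z^{-k-1}$ then reads $z^{m}\zeta^{-m-1}$. Hence the second series is $\sum_{m\leq-1}z^{m}\zeta^{-m-1}$, and together with the first series, which supplies precisely the terms with $m\geq 0$, they assemble into the full bilateral sum $\zeta^{-1}(1-z/\zeta)^{-1}+z^{-1}(1-\zeta/z)^{-1}=\sum_{m\in\Z}z^{m}\zeta^{-m-1}$. The role of the two distinct expansion domains is exactly to make the nonnegative and negative powers of $z$ each appear once, with no overlap and no double counting of the $z^{-1}$ term.

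Finally, writing $f(z)=\sum_{i}a_{i}z^{-i}$ and multiplying, the product is $\sum_{m,i}a_{i}\zeta^{-m-1}z^{m-i}$. Applying $\mathrm{res}_z$, which by the definition stated just before this lemma extracts the coefficient of $z^{-1}$, forces $m-i=-1$, i.e. $m=i-1$; the surviving contribution is $\sum_{i}a_{i}\zeta^{-i}=f(\zeta)$, which is the claim. I expect no genuine obstacle here: the only delicate point is the bookkeeping of the two series in their prescribed domains, which legitimizes the reindexing and lets the bilateral delta kernel emerge cleanly. An equivalent but slightly longer route would evaluate $\mathrm{res}_z$ of each half separately, recovering from the two halves the complementary pieces of $f$, and then add them; but the delta-kernel viewpoint is the most transparent.
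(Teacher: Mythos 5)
Your proof is correct and complete. The paper states this lemma without any proof at all (it is invoked as a well-known residue identity), so there is no in-text argument to compare against; your delta-kernel computation --- expanding $\zeta^{-1}(1-z/\zeta)^{-1}$ and $z^{-1}(1-\zeta/z)^{-1}$ as geometric series in their prescribed domains, reindexing so the two halves assemble disjointly into the bilateral kernel $\sum_{m\in\Z} z^{m}\zeta^{-m-1}$, and then extracting the coefficient of $z^{-1}$ in accordance with the definition of $\mathrm{res}_z$ given just before the lemma --- is the standard justification and fills the gap cleanly. Your observation that the two expansion conventions guarantee no overlap between the $m\geq 0$ and $m\leq -1$ ranges is exactly the point that makes the formal manipulation legitimate, since each coefficient of the double series then receives only finitely many (in fact, single) contributions.
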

\begin{lemma}
Let $P$ be a pseudo-shift operators $P=\sum p_i\Lambda^i$,
then
\begin{equation}\label{Pnegativepart}
P_{-}=\sum\limits_{i=1}^{\infty}\Lambda^{-i+1}
[res_{_{\Lambda}}(\Lambda^{i-1}P)]\Lambda^{-1},\ \
P_{+}
=\sum\limits_{i=0}^{\infty}\Lambda^{i+1}
[res_{_{\Lambda}}(\Lambda^{-i-1}P)]\Lambda^{-1}.
\end{equation}
\end{lemma}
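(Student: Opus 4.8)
The plan is to prove both identities by direct substitution of the expansion $P=\sum_j p_j(n)\Lambda^j$, using only two facts that are already available: that $res_\Lambda$ extracts the coefficient of $\Lambda^0$, and the commutation rule $\Lambda^a f(n)=f(n+a)\Lambda^a$ governing how a multiplication operator is moved past a power of the shift. Everything then reduces to tracking the powers of $\Lambda$ together with the arguments at which the coefficients $p_j$ are evaluated, after which the two sides are compared monomial by monomial.

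For the first identity I would fix $i\ge 1$ and evaluate the inner residue. Writing $\Lambda^{i-1}P=\sum_j p_j(n+i-1)\Lambda^{i-1+j}$, the operation $res_\Lambda$ kills every term except the one whose total $\Lambda$-power vanishes, so $res_\Lambda(\Lambda^{i-1}P)$ is a single function of $n$. I then conjugate this function through the outer factors $\Lambda^{-i+1}(\cdot)\Lambda^{-1}$ by the commutation rule; since the enclosed object is a pure multiplication operator, the product collapses to a single monomial proportional to $\Lambda^{-i}$, whose coefficient is read off directly as one of the $p_j$. Summing over $i\ge 1$ then assembles precisely the purely negative part of $P$, which is $P_-$.

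The second identity is handled in exactly the same way, with the signs of the shift exponents reversed: for $i\ge 0$ the residue $res_\Lambda(\Lambda^{-i-1}P)$ again isolates a single coefficient, and conjugation by $\Lambda^{i+1}(\cdot)\Lambda^{-1}$ collapses the $i$-th term to a single monomial proportional to $\Lambda^{i}$, so that the sum over $i\ge 0$ reproduces the non-negative part of $P$. As a unifying alternative, I would first establish the global reconstruction $P=\sum_k \Lambda^{k}\,res_\Lambda(\Lambda^{-k}P)$, which is the very same computation carried out without restricting the range of the summation index, and then read off the positive and negative parts by restricting to $k\ge 0$ and $k\le -1$ respectively.

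I expect the only real care to be in the shift bookkeeping rather than in any conceptual step. One must keep precise track of whether a given coefficient is evaluated at $n$, at $n+i-1$, or at $n-i$ as it is commuted past each factor $\Lambda^{\pm1}$, and one must check the endpoints of the two summations so that the single power $\Lambda^{0}$ is allocated consistently with the chosen splitting of $P$ into $P_+$ and $P_-$; this is what guarantees that the ranges $i\ge 1$ and $i\ge 0$ partition the powers of $\Lambda$ with neither overlap nor omission. There is no analytic obstacle here: once the conjugation is performed correctly and the residue convention is applied uniformly, each summand is a single monomial and the two identities follow by inspection.
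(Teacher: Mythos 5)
Your overall strategy is exactly the computation the paper itself performs (the paper merely runs it in the opposite direction, starting from $P_{\pm}=\sum p_{\mp i}\Lambda^{\mp i}$ and using $\Lambda^{a}f(n)=f(n+a)\Lambda^{a}$ to recognize each shifted coefficient as a residue), so there is no methodological divergence. But there is a concrete gap at the final step, and it is precisely the endpoint/allocation issue you flagged and then left unexecuted: you anchor the argument on the convention that $\mathrm{res}_{\Lambda}$ extracts the coefficient of $\Lambda^{0}$, and under that convention the stated identities are \emph{false}, so the claim that the sums "assemble precisely" $P_{-}$ and $P_{+}$ cannot follow by inspection. Indeed, with $\mathrm{res}_{\Lambda}\bigl(\sum A_{i}\Lambda^{i}\bigr)=A_{0}$ one finds $\mathrm{res}_{\Lambda}(\Lambda^{i-1}P)=p_{1-i}(n+i-1)$ and $\mathrm{res}_{\Lambda}(\Lambda^{-i-1}P)=p_{i+1}(n-i-1)$, hence
\begin{equation*}
\Lambda^{-i+1}\bigl[\mathrm{res}_{\Lambda}(\Lambda^{i-1}P)\bigr]\Lambda^{-1}=p_{1-i}(n)\Lambda^{-i},
\qquad
\Lambda^{i+1}\bigl[\mathrm{res}_{\Lambda}(\Lambda^{-i-1}P)\bigr]\Lambda^{-1}=p_{i+1}(n)\Lambda^{i},
\end{equation*}
so the two sums produce $\sum_{k\le 0}p_{k}\Lambda^{k-1}$ and $\sum_{k\ge 1}p_{k}\Lambda^{k-1}$ — the two halves of $P\Lambda^{-1}$, not $P_{-}$ and $P_{+}$. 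Already the $i=1$ term of the first sum wrongly contributes the diagonal coefficient $p_{0}(n)\Lambda^{-1}$ to the "negative part".

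The identities as written hold exactly when $\mathrm{res}_{\Lambda}$ is taken to be the coefficient of $\Lambda^{-1}$: then $\mathrm{res}_{\Lambda}(\Lambda^{i-1}P)=p_{-i}(n+i-1)$ and $\mathrm{res}_{\Lambda}(\Lambda^{-i-1}P)=p_{i}(n-i-1)$, your conjugation step yields $p_{-i}(n)\Lambda^{-i}$ and $p_{i}(n)\Lambda^{i}$, and the sums close to $P_{-}$ and $P_{+}$. This $A_{-1}$ convention is in fact the one the paper uses in all of its actual computations — e.g.\ the preceding lemma evaluates $\mathrm{res}_{\Lambda}[PQ^{*}]=\sum_{i-j=-1}p_{i}(n)q_{j}(n-1)$, which is the coefficient of $\Lambda^{-1}$ — so the displayed definition $\mathrm{res}_{\Lambda}\sum A_{i}\Lambda^{i}=A_{0}$ is a typo that you inherited at face value. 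Note also that your proposed unifying alternative $P=\sum_{k}\Lambda^{k}\,\mathrm{res}_{\Lambda}(\Lambda^{-k}P)$ is correct under the $A_{0}$ convention, but splitting it at $k\ge 0$ and $k\le -1$ proves the \emph{reindexed} identities $P_{+}=\sum_{i\ge 0}\Lambda^{i}\,\mathrm{res}_{\Lambda}(\Lambda^{-i}P)$ and $P_{-}=\sum_{i\ge 1}\Lambda^{-i}\,\mathrm{res}_{\Lambda}(\Lambda^{i}P)$, which differ from the lemma's formulas by the trailing $\Lambda^{-1}$ and shifted exponents; it does not rescue the statement as given. Once you fix the residue convention to $A_{-1}$ (or equivalently reindex the lemma), your bookkeeping goes through and coincides with the paper's proof.
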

\begin{proof}
The proof can be derived by the following two direct calculations
\begin{eqnarray*}\label{Pnegativepart'}
P_{-}&=&\sum\limits_{i=1}^{\infty}
p_{-i}\Lambda^{-i}\\
&=&\sum\limits_{i=1}^{\infty}\Lambda^{-i+1}
p_{-i}(n+(i-1))\Lambda^{-1}\\
&=&\sum\limits_{i=1}^{\infty}\Lambda^{-i+1}
[res_{_{\Lambda}}(\Lambda^{i-1}\sum_{j} p_j\Lambda^j)]\Lambda^{-1}\\
&=&\sum\limits_{i=1}^{\infty}\Lambda^{-i+1}
[res_{_{\Lambda}}(\Lambda^{i-1}P)]\Lambda^{-1},
\end{eqnarray*}
\begin{eqnarray*}\label{Pnegativepart'1}
P_{+}&=&\sum\limits_{i=0}^{\infty}
p_{i}\Lambda^{i}\\
&=&\sum\limits_{i=0}^{\infty}\Lambda^{i+1}
p_{i}(n-i-1)\Lambda^{-1}\\
&=&\sum\limits_{i=0}^{\infty}\Lambda^{i+1}
[res_{_{\Lambda}}(\Lambda^{-i-1}\sum_{j} p_j\Lambda^j)]\Lambda^{-1}\\
&=&\sum\limits_{i=0}^{\infty}\Lambda^{i+1}
[res_{_{\Lambda}}(\Lambda^{-i-1}P)]\Lambda^{-1}.
\end{eqnarray*}
\end{proof}

Then basing on above three lemmas, we can get the following important theorem.

\begin{theorem}\label{generatingfunctionadditionsymmetries}
The generating function of additional symmetries $Y(\lambda, \mu),\bar Y(\lambda, \mu)$ have the following simple form
\begin{equation}
Y(\lambda, \mu)=w(n,x,\mu)
\frac{\Lambda^{-1}}{1-\Lambda^{-1}}w^*(n,x,\lambda),
\end{equation}
\begin{equation}
\bar Y(\lambda, \mu)=  \bar w(n,y,\mu)
\frac{1}{1-\Lambda}\bar w^*(n,y,\lambda).
\end{equation}
\end{theorem}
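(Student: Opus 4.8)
The plan is to prove both identities by the same mechanism, handling $Y(\la,\mu)$ in full and obtaining $\bar Y(\la,\mu)$ by the dual argument. Both sides of each identity are pseudo-shift operators living entirely in the negative (respectively non-negative) part, so it suffices to match coefficients. The three driving facts are: the operators $L,M_L$ are diagonalized on the wave functions by the spectral parameter and its derivative; the $l$-summation in \eqref{generatingfunction} assembles a formal delta function; and the $m$-summation is a Taylor series. First I would record the spectral relations. From $L=\P\circ\La\circ\P^{-1}$, $w=\P e^{\xi(n,x,z)}$ and $\La e^{\xi}=z\,e^{\xi}$ one gets $L^{j}w=z^{j}w$; from $M_L=\P\Gamma_L\P^{-1}$ in \eqref{Moperator}, using that $\Gamma_L$ acts on $e^{\xi}$ as $\d_z$ and commutes with $\d_z$ (it carries no $z$-dependence), one obtains $M_L^{m}w=\d_z^{m}w$, whence
\begin{equation}
M_L^{m}L^{l+m}w(n,x,z)=z^{l+m}\,\d_z^{m}w(n,x,z).
\end{equation}
The adjoint relations for $w^{*}$ and the barred analogues for $\bar L,M_R,\bar w,\bar w^{*}$ follow identically from \eqref{dressing} and \eqref{Moperator}.

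Next I would reduce the operator identity to a residue computation. By \eqref{Pnegativepart} the negative operator $\bigl(M_L^{m}L^{l+m}\bigr)_{-}$ is determined by the residues $res_{\La}\bigl(\La^{i-1}M_L^{m}L^{l+m}\bigr)$, $i\ge 1$, so it is enough to evaluate $res_{\La}\bigl(\La^{i-1}Y(\la,\mu)\bigr)$. Writing $M_L^{m}L^{l+m}=\P\Gamma_L^{m}\La^{l+m}\P^{-1}$, I would apply the residue identity \eqref{PQidentity} with $P=\La^{i-1}\P\Gamma_L^{m}\La^{l+m}$ and $Q^{*}=\P^{-1}$ to convert the $\La$-residue into a $z$-residue of a wave-function bilinear; by the first step this bilinear equals, up to an explicit shift in $n$ and a power of $z$, $\bigl(\d_z^{m}w\bigr)(n',x,z)\,w^{*}(n'',x,z)$. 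Substituting into \eqref{generatingfunction} and interchanging the formal summations gives
\begin{equation}
res_{\La}\bigl(\La^{i-1}Y\bigr)=\sum_{m\ge0}\frac{(\mu-\la)^{m}}{m!}\,res_z\Bigl[\Bigl(\sum_{l}\la^{-l-m-1}z^{l+m-1}\Bigr)\,\bigl(\d_z^{m}w\bigr)(n',x,z)\,w^{*}(n'',x,z)\Bigr].
\end{equation}

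The two summations are then collapsed in turn. The inner $l$-sum is proportional to $\sum_{l}(z/\la)^{l}$, which is precisely the formal delta function of \eqref{deltafunction}; applying that lemma evaluates the $z$-residue at $z=\la$ and freezes $w^{*}$ at the spectral parameter $\la$. What survives of the $m$-sum is $\sum_{m\ge0}\frac{(\mu-\la)^{m}}{m!}\bigl(\d_{\la}^{m}w\bigr)(n',x,\la)$, which by Taylor's theorem equals $w(n',x,\mu)$. Hence each residue $res_{\La}(\La^{i-1}Y)$ is a product of $w(\,\cdot\,,x,\mu)$ and $w^{*}(\,\cdot\,,x,\la)$ with the correct $n$-shifts; feeding these back through \eqref{Pnegativepart} reassembles exactly $w(n,x,\mu)\,\frac{\La^{-1}}{1-\La^{-1}}\,w^{*}(n,x,\la)$, the geometric series $\sum_{i\ge1}\La^{-i}$ being generated by the reconstruction index $i$. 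The identity for $\bar Y(\la,\mu)$ is obtained the same way, using the non-negative reconstruction in \eqref{Pnegativepart}, the relations for $M_R,\bar L,\bar w,\bar w^{*}$ and the residue identity \eqref{PQidentity*}; there the computation yields $\frac{1}{1-\La}=\sum_{i\ge0}\La^{i}$.

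The main obstacle I anticipate is not conceptual but the bookkeeping of the discrete $n$-shifts and of the two bilateral formal sums: one must justify interchanging $\sum_{l}$ and $\sum_{m}$ with $res_z$, verify that the delta-function lemma applies to $\sum_{l}(z/\la)^{l}$ under the prescribed expansion conventions (series in $\la^{-1}$ versus $z^{-1}$), and track precisely the arguments $n',n''$ produced by $\La^{i-1}$ and by the ``$Q(n-1)$'' shift in \eqref{PQidentity}, so that the final coefficients match $\frac{\La^{-1}}{1-\La^{-1}}$ exactly rather than up to a shift. The one analytic point underpinning everything is the rigorous verification of $M_L^{m}w=\d_z^{m}w$, i.e. the action of $\Gamma_L$ on $e^{\xi}$ and its commutation with $\d_z$.
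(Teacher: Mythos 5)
Your proposal is correct and follows essentially the same route as the paper's proof: reduce $\bigl(M_L^m L^{l+m}\bigr)_{\mp}$ to residues via the reconstruction lemma \eqref{Pnegativepart}, convert $res_{\Lambda}$ to $res_z$ via \eqref{PQidentity}--\eqref{PQidentity*} using the spectral relations $L\,w=z\,w$, $M_L\,w=\partial_z w$, then collapse the double sum with the delta-function lemma \eqref{deltafunction} and Taylor's formula. The only cosmetic difference is the order of the two collapses --- you apply the delta function (freezing $z=\lambda$) before summing the Taylor series, while the paper first sums over $m$ to produce $w(n,x,z+\mu-\lambda)$ and then applies the delta function --- a trivially equivalent reordering of formal steps.
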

\begin{proof}
 Using the above three
lemmas, we get
\begin{eqnarray*}
\left(M_L^mL^{l+m}
\right)_{-}&=&\sum\limits_{i=1}^{\infty}\Lambda^{-i+1}\
res_{_{\Lambda}}
[\Lambda^{i-1}\ \P\ \Gamma_L^m\ \Lambda^{N(l+m)} \P^{-1}]\Lambda^{-1}\\
&=&\sum\limits_{i=1}^{\infty}\Lambda^{-i+1}\ res_{_{\lambda}} [
\left(\Lambda^{i-1}\ \P\ \Gamma_L^m\ \Lambda^{l+m}
e^{\xi(n,x,\lambda)}\right)
\left((\P^{-1})^*(n-1)e^{-\xi(n,x,\lambda)}\right)]\Lambda^{-1}\\
&=&\sum\limits_{i=1}^{\infty}\Lambda^{-i+1}\ res_{_{\lambda}} [
\left(\ \lambda^{l+m}\Lambda^{i-1}\ \P\ \Gamma_L^m
e^{\xi(n,x,\lambda)}\right)
\left(w^*(n-1,x,\lambda)\right)]\Lambda^{-1}\\
&=&\sum\limits_{i=1}^{\infty}\Lambda^{-i+1}\ res_{_{\lambda}} [
\left(\ \lambda^{l+m}\Lambda^{i-1}M_L^m \P
e^{\xi(n,x,\lambda)})\right)
\left(\Lambda^{-1}w^*(n,x,\lambda)\right)]\Lambda^{-1}\\
&=&\sum\limits_{i=1}^{\infty}\Lambda^{-i+1}\ res_{_{\lambda}} [
\left(\ \lambda^{l+m}\Lambda^{i-1} M_L^m w(x,\lambda)\right)
\left(\Lambda^{-1}w^*(n,x,\lambda)\right)]\Lambda^{-1}\\
&=&\sum\limits_{i=1}^{\infty}\Lambda^{-i+1}\ res_{_{\lambda}} [
\left(\ \lambda^{l+m}(\d_\lambda^m
w(x,\lambda))(n+(i-1),x,\lambda) \right)
\left(w^*(n-1,x,\lambda)\right)]\Lambda^{-1}\\
&=&\sum\limits_{i=1}^{\infty}\ res_{_{\lambda}} [ \left(\
\lambda^{l+m}(\d_\lambda^m w(x,\lambda))(n,x,\lambda) \right)
\left(w^*(n-i ,x,\lambda)\right)]\Lambda^{-i}.
\end{eqnarray*}

Take the above result back into $Y(\lambda, \mu)$, which becomes
\begin{eqnarray*}
&&Y(\lambda,\mu)=\sum\limits_{m=0}^{\infty}\dfrac{(\mu-\lambda)^m}{m!}
\sum\limits_{l=-\infty}^{\infty}\lambda^{-l-m-1}
(M_L^mL^{l+m})_{-}\\
&=&\sum\limits_{m=0}^{\infty}\dfrac{(\mu-\lambda)^m}{m!}
\sum\limits_{l=-\infty}^{\infty}
\lambda^{-l-m-1}\sum\limits_{i=1}^{\infty}\ res_{_{z}} [ \left(\
z^{l+m}(\d_z^m w(x,z)) \right) \left(w^*(n-i
,x,z)\right)]\Lambda^{-i}\\
&=& \sum\limits_{l=-\infty}^{\infty}
\lambda^{-l-m-1}\sum\limits_{i=1}^{\infty}\ res_{_{z}} [
\left(z^{l+m}( w(n,x,z+\mu-\lambda)) \right)
\left(w^*(n-i
,x,z)\right)]\Lambda^{-i}\\
&=&\sum\limits_{l=-\infty}^{\infty}
\frac{z^{l+m}}{\lambda^{l+m+1}}\sum\limits_{i=1}^{\infty}\
res_{_{z}} [ \left(( w(n,x,z+\mu-\lambda)) \right)
\left(w^*(n-i
,x,z)\right)]\Lambda^{-i}\\
&=& \sum\limits_{i=1}^{\infty}\ res_{_{z}}
[(\frac{1}{z(1-\frac{\lambda}{z})}+\frac{1}{\lambda(1-\frac{z}{\lambda})})
w(n,x,z+\mu-\lambda) w^*(n-i ,x,z)]\Lambda^{-i}\\
&=& \sum\limits_{i=1}^{\infty} w(n,x,\mu) w^*(n-i
,x,\lambda)\Lambda^{-i}\\
&=& \sum\limits_{i=1}^{\infty} w(n,x,\mu)
\Lambda^{-i}w^*(n,x,\lambda)\\
&=&  w(n,x,\mu)
\frac{\Lambda^{-1}}{1-\Lambda^{-1}}w^*(n,x,\lambda).
\end{eqnarray*}
Similarly for the other lattice-field hierarchy, the similar calculations can be done as

\begin{eqnarray*}
\left(M_R^mL^{l+m}
\right)_{+}&=&\sum\limits_{i=0}^{\infty}\Lambda^{i+1}\
res_{_{\Lambda}}
[\Lambda^{-i-1}\ \bar \P\ \Gamma_R^m\ \Lambda^{-(l+m)} \bar \P^{-1}]\Lambda^{-1}\\
&=&\sum\limits_{i=0}^{\infty}\Lambda^{i+1}\ res_{_{\lambda}} [
\left(\Lambda^{-i-1}\ \bar \P\ \Gamma_R^m\ \Lambda^{-(l+m)}
e^{\bar \xi(n,y,\lambda)}\right)
\left((\bar \P^{-1})^*e^{\bar \xi(n,y,\lambda)}\right)]\Lambda^{-1}\\
&=&\sum\limits_{i=0}^{\infty}\Lambda^{i+1}\ res_{_{\lambda}} [
\left(\ \lambda^{-l-m}\Lambda^{-i-1}\ \bar \P\ \Gamma_R^m
e^{-\bar \xi(n,y,\lambda)}\right)
\left( \bar w^*(n-1,y,\lambda)\right)]\Lambda^{-1}\\
&=&\sum\limits_{i=0}^{\infty}\Lambda^{i+1}\ res_{_{\lambda}} [
\left(\ \lambda^{-l-m}\Lambda^{-i-1}M_R^m \bar \P
e^{\bar \xi(n,y,\lambda)})\right)
\left(\bar w^*(n-1,y,\lambda)\right)]\Lambda^{-1}\\
&=&\sum\limits_{i=0}^{\infty}\Lambda^{i+1}\ res_{_{\lambda}} [
\left(\ \lambda^{-l-m}(\d_{\lambda^{-1}}^m
\bar w(n-i-1,y,\lambda)) \right)
\left( \bar w^*(n-1,y,\lambda)\right)]\Lambda^{-1}\\
&=&\sum\limits_{i=0}^{\infty}\ res_{_{\lambda}} [ \left(\
\lambda^{-l-m}(\d_{\lambda^{-1}}^m \bar w(y,\lambda))(n,y,\lambda) \right)
\left(\bar w^*(n+i ,y,\lambda)\right)]\Lambda^{i}.
\end{eqnarray*}

Take this back into $\bar Y(\lambda,\mu)$ will lead to
\begin{eqnarray*}
&&\bar Y(\lambda,\mu)=\sum\limits_{m=0}^{\infty}\dfrac{(\mu-\lambda)^m}{m!}
\sum\limits_{l=-\infty}^{\infty}\lambda^{l+m-1}
(M_R^mL^{l+m})_{+}\\
&=&\sum\limits_{m=0}^{\infty}\dfrac{(\mu-\lambda)^m}{m!}
\sum\limits_{l=-\infty}^{\infty}
\lambda^{l+m-1}\sum\limits_{i=0}^{\infty}\ res_{_{z}} [ \left(\
z^{-l-m}(\d_z^m \bar w(y,z)) \right) \left(\bar w^*(n+i
,y,z)\right)]\Lambda^{i}\\
&=& \sum\limits_{l=-\infty}^{\infty}
\lambda^{l+m-1}\sum\limits_{i=0}^{\infty}\ res_{_{z}} [
\left(z^{-l-m}( \bar w(n,y,z+\mu-\lambda)) \right)
\left(\bar w^*(n+i
,y,z)\right)]\Lambda^{i}\\
&=&\sum\limits_{l=-\infty}^{\infty}
\frac{\lambda^{l+m-1}}{z^{l+m}}\sum\limits_{i=0}^{\infty}\
res_{_{z}} [ \left(( \bar w(n,y,z+\mu-\lambda)) \right)
\left(\bar w^*(n+i
,y,z)\right)]\Lambda^{i}\\
&=& \sum\limits_{i=0}^{\infty}\ res_{_{z}}
[(\frac{1}{z(1-\frac{\lambda}{z})}+\frac{1}{\lambda(1-\frac{z}{\lambda})})
\bar w(n,y,z+\mu-\lambda) \bar w^*(n+i ,y,z)]\Lambda^{i}\\
&=& \sum\limits_{i=0}^{\infty} \bar w(n,y,\mu) \bar w^*(n+i
,y,\lambda)\Lambda^{i}\\
&=& \sum\limits_{i=0}^{\infty} \bar w(n,y,\mu)
\Lambda^{i}\bar w^*(n,y,\lambda)\\
&=&  \bar w(n,y,\mu)
\frac{1}{1-\Lambda}\bar w^*(n,y,\lambda),
\end{eqnarray*}
where the eq.(\ref{deltafunction}) is used.
\end{proof}
The generating functions tell us the lattice-field hierarchies have the following symmetry which is sometimes called ``ghost symmetry".
\begin{proposition}The  two-lattice field hierarchies have the following additional ghost symmetry
\begin{equation}
\d_t\P(n,x)=-(w(n,x)
\frac{\Lambda^{-1}}{1-\Lambda^{-1}}w^*(n,x))\P(n,x),
\end{equation}
\begin{equation}
\d_{\bar t}\bar \P(n,y)=  (\bar w(n,y)
\frac{1}{1-\Lambda}\bar w^*(n,y))\bar \P(n,y).
\end{equation}
\end{proposition}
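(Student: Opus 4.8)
The plan is to read both formulas off Theorem~\ref{generatingfunctionadditionsymmetries} by interpreting the generating functions $Y(\lambda,\mu)$ and $\bar Y(\lambda,\mu)$ as the generators of a single combined flow assembled out of the additional symmetries.

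First I would note that the defining series \eqref{generatingfunction} packages the operators $(M_L^mL^{l+m})_-$ with the fixed scalar weights $\frac{(\mu-\lambda)^m}{m!}\lambda^{-l-m-1}$. Since the additional flows act on the dressing operator by \eqref{definitionadditionalflowsonphi2} as $\partial_{t^*_{m,l}}\P=-(M_L^mL^l)_-\P$, the operator $(M_L^mL^{l+m})_-\P$ is just $-\partial_{t^*_{m,l+m}}\P$. I would therefore define the ghost flow $\d_t$ as the same weighted superposition of the $\partial_{t^*_{m,l+m}}$, so that by linearity of the action on $\P$ one gets $\d_t\P=-Y(\lambda,\mu)\P$. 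The analogous construction in the barred sector, now using the $+$-projection and the positive sign in \eqref{definitionadditionalflowsonphi2}, produces $\d_{\bar t}\bar\P=\bar Y(\lambda,\mu)\bar\P$.

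Next I would substitute the closed expressions supplied by Theorem~\ref{generatingfunctionadditionsymmetries}, namely $Y(\lambda,\mu)=w(n,x,\mu)\frac{\Lambda^{-1}}{1-\Lambda^{-1}}w^*(n,x,\lambda)$ and $\bar Y(\lambda,\mu)=\bar w(n,y,\mu)\frac{1}{1-\Lambda}\bar w^*(n,y,\lambda)$, into the two flow equations above. Suppressing the now-fixed spectral arguments (writing $w(n,x)$ for $w(n,x,\mu)$, $w^*(n,x)$ for $w^*(n,x,\lambda)$, and similarly in the barred case) then yields verbatim the two claimed identities, so the proposition is an immediate transcription of the theorem.

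It remains to justify the ``ghost symmetry'' terminology, i.e.\ that these flows are genuine symmetries. For this I would invoke the commutativity recorded before Proposition~\ref{WinfiniteCalgebra}: each $\partial_{t^*_{m,l}}$ commutes with every hierarchy flow $\partial_{x_n}$, and each $\partial_{\bar t^*_{m,l}}$ with every $\partial_{y_n}$. Because $\d_t$ and $\d_{\bar t}$ are fixed linear combinations of these commuting flows, they inherit the commutativity and hence define symmetries of the two-lattice field hierarchies. The only delicate point I anticipate is the index bookkeeping in the first step: one must verify that the weight $\lambda^{-l-m-1}$ together with the shift $l\mapsto l+m$ reproduces \eqref{generatingfunction} exactly, so that the superposition of additional flows closes precisely into $-Y(\lambda,\mu)\P$ and not some reindexed variant. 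Once this matching is pinned down the remainder is purely formal.
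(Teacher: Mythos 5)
Your argument is correct and is essentially the paper's own proof made explicit: the paper, too, simply asserts that $\d_t$ and $\d_{\bar t}$ commute with the hierarchy flows because $Y(\lambda,\mu)$ and $\bar Y(\lambda,\mu)$ are the generating functions of the additional symmetries, which is exactly your weighted-superposition reading of Theorem~\ref{generatingfunctionadditionsymmetries} combined with linearity of the flow action on $\P$ and $\bar\P$. The index bookkeeping you flag is real but harmless: since the generating-function sum runs over all $l\in\Z$, the terms with $l+m<0$ require extending the defining additional flows $\partial_{t^*_{m,l}}$ to negative $l$, which makes sense because the pseudo-shift operator $L$ is invertible so $(M_L^mL^{l})_-$ is still well defined --- a point the paper itself leaves implicit.
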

\begin{proof}
One can prove the commutativity $[\d_{ x_n},\d_{ t}]\P(n,x)=[\d_{ y_n},\d_{\bar t}]\bar \P(n,y)=0$, because $Y(\lambda, \mu)$ and $\bar Y(\lambda, \mu)$ are the generating functions of additional symmetries of two independent lattice-field hierarchies.
\end{proof}
Then by acting on $e^{\xi}$ and $e^{\bar \xi}$ respectively one can derive the following flow equations about
wave functions $w,w^*,\bar w,\bar w^*$,
\begin{align}\label{ghosteq1}\left\{
\begin{aligned}
\d_tw(n,x)&=-(w(n,x)
\frac{\Lambda^{-1}}{1-\Lambda^{-1}}w^*(n,x))w(n,x),\\
\d_tw^*(n,x)&=(w^*(n,x)
\frac{\Lambda}{1-\Lambda}w(n,x))w(n,x),\end{aligned}\right.
\end{align}

\begin{align}\label{ghosteq2}\left\{
\begin{aligned}
\d_{\bar t}\bar w(n,y)&=  (\bar w(n,y)
\frac{1}{1-\Lambda}\bar w^*(n,y))\bar w(n,y),\\
\d_{\bar t}\bar w^*(n,y)&= - (\bar w^*(n,y)
\frac{1}{1-\Lambda^{-1}}\bar w(n,y))\bar w^*(n,y).\end{aligned}\right.
\end{align}

The eq.\eqref{ghosteq1} and eq.\eqref{ghosteq2} are two new integrable coupled nonlocal difference equations to our best knowledge.

\sectionnew{Constrained two-lattice field hierarchies}

Using the above ghost symmetry, we can do a reduction over the Lax operators of two-lattice field hierarchies, i.e.
the Lax operators of the constrained lattice-field hierarchies can be reduced into the following operators
\begin{equation}\label{constraintL}
\L=\Lambda+u+w(n,x)
\frac{\Lambda^{-1}}{1-\Lambda^{-1}}w^*(n,x),
\end{equation}
\begin{equation}\label{constraintbarL}
\bar \L= v\Lambda^{-1}+ \bar w(n,y)
\frac{1}{1-\Lambda}\bar w^*(n,y).
\end{equation}
Then the wave function under the above constraints eq.\eqref{constraintL} and eq.\eqref{constraintbarL}  will compatible with the following Sato equations
\begin{equation}\label{satocons}
\d_{x_n}w=\L^n_+w,\ \ \ \d_{x_n}w^*=-(\L^{n}_+)^*w^*,
\end{equation}
\begin{equation}\label{satocons2}
\d_{y_n}\bar w=-\bar \L^n_-\bar w,\ \ \ \d_{y_n}\bar w^*=(\bar \L^{n}_-)^*\bar w^*.
\end{equation}
The two constrained lattice-field hierarchies can also be defined as

\begin{equation}\label{laxconstraind}
\d_{x_n}\L=[\L^n_+,\L],\ \
\d_{y_n}\bar \L=[-\bar \L^n_-,\bar \L].
\end{equation}
Later we will prove that the Lax equations in the eq.\eqref{laxconstraind} is compatible with two Lax operators \eqref{constraintL} and \eqref{constraintbarL} using  the following proposition.

\begin{proposition}\label{+-}
An operator $B:=\sum_{n=0}^{\infty}b_n\La^n$ is a non-negative shift operator, an operator  $C:=\sum_{n=1}^{\infty}c_n\La^{-n}$ is a negative shift operator and $f(n), g(n)$ are two functions of discrete parameter $n$. The following identities hold
\begin{equation}
(Bf \frac{\Lambda^{-1}}{1-\Lambda^{-1}} g)_-=B(f) \frac{\Lambda^{-1}}{1-\Lambda^{-1}} g,\ \ \ (f \frac{\Lambda^{-1}}{1-\Lambda^{-1}} gB)_-=f \frac{\Lambda^{-1}}{1-\Lambda^{-1}}B^*(g),
\end{equation}
\begin{equation}
(Cf \frac{1}{1-\Lambda} g)_+=C(f)\frac{1}{1-\Lambda} g,\ \ \ (f \frac{1}{1-\Lambda} gC)_+=f \frac{1}{1-\Lambda}C^*(g).
\end{equation}
\end{proposition}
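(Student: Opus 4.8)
The plan is to prove all four identities by one and the same direct computation: expand every factor as a formal Laurent series in the shift $\Lambda$, use the single commutation rule $\Lambda^{k}h(n)=h(n+k)\Lambda^{k}$ to move all coefficient functions to the left of the shifts, and then read off the requested projection by a change of summation index. The two geometric-series operators are to be treated through their explicit expansions $\frac{\Lambda^{-1}}{1-\Lambda^{-1}}=\sum_{i\geq 1}\Lambda^{-i}$ and $\frac{1}{1-\Lambda}=\sum_{i\geq 0}\Lambda^{i}$, so that the ghost-type factors read $f\frac{\Lambda^{-1}}{1-\Lambda^{-1}}g=\sum_{i\geq 1}f(n)g(n-i)\Lambda^{-i}$ and $f\frac{1}{1-\Lambda}g=\sum_{i\geq 0}f(n)g(n+i)\Lambda^{i}$.

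First I would prove the left-multiplication identity $(Bf\frac{\Lambda^{-1}}{1-\Lambda^{-1}}g)_-=B(f)\frac{\Lambda^{-1}}{1-\Lambda^{-1}}g$. Writing $B=\sum_{k\geq 0}b_k\Lambda^{k}$ and multiplying by the expansion above yields $\sum_{k\geq 0}\sum_{i\geq 1}b_k(n)f(n+k)g(n+k-i)\Lambda^{k-i}$. Setting $m=i-k$ and keeping only the terms with $k-i<0$ collects the coefficient of $\Lambda^{-m}$ (for each $m\geq 1$) into $\left[\sum_{k\geq 0}b_k(n)f(n+k)\right]g(n-m)=B(f)(n)\,g(n-m)$, which is exactly the claimed right-hand side. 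The right-multiplication identity $(f\frac{\Lambda^{-1}}{1-\Lambda^{-1}}gB)_-=f\frac{\Lambda^{-1}}{1-\Lambda^{-1}}B^{*}(g)$ is handled the same way, except that one first records the adjoint convention $B^{*}=\sum_{k\geq 0}\Lambda^{-k}b_k$ (from $\Lambda^{*}=\Lambda^{-1}$, $f^{*}=f$), so that $B^{*}(g)(n)=\sum_{k\geq 0}b_k(n-k)g(n-k)$; the same index extraction then reproduces the right-hand side.

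The two identities involving the strictly negative operator $C=\sum_{k\geq 1}c_k\Lambda^{-k}$ are entirely parallel, with $_-$ replaced by $_+$ and the roles of positive and negative powers interchanged. There I would use $\frac{1}{1-\Lambda}=\sum_{i\geq 0}\Lambda^{i}$, extract the terms with $k-i\geq 0$ (equivalently $i\geq k$) into the non-negative part, and use $C^{*}=\sum_{k\geq 1}\Lambda^{k}c_k$, hence $C^{*}(g)(n)=\sum_{k\geq 1}c_k(n+k)g(n+k)$, for the right-multiplication case. In each instance the coefficient of a fixed power $\Lambda^{m}$ ($m\geq 0$) collapses to $C(f)(n)g(n+m)$, respectively $f(n)\,C^{*}(g)(n+m)$.

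The only point that genuinely needs care — and the place where the hypotheses are actually used — is verifying that the coefficient of each fixed power $\Lambda^{-m}$ (resp. $\Lambda^{m}$) in the projected product is independent of $m$ and equals the \emph{full} sum $\sum_{k}b_k(n)f(n+k)=B(f)(n)$ (resp. the full $C(f)$). This is exactly where the non-negativity of $B$ (resp. the negativity of $C$) enters: since $k\geq 0$, every index $k$ satisfies $i=m+k\geq 1$ for all $m\geq 1$, so the range of $k$ contributing to $\Lambda^{-m}$ does not depend on $m$ and the coefficient collapses cleanly to $B(f)$. Had $B$ contained negative powers, the lower limit of the $k$-sum would become $k\geq 1-m$, depending on $m$, and the identity would break. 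The main (though modest) obstacle is therefore to make this index bookkeeping airtight, together with checking that the $\Lambda^{0}$ term is correctly assigned to the non-negative part under the projection convention fixed in the preceding lemmas.
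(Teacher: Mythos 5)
Your computation is correct in all four cases, but there is nothing in the paper to compare it against: Proposition \ref{+-} is stated without proof and immediately applied to show that the constraints \eqref{constraintL} and \eqref{constraintbarL} are compatible with the Lax equations \eqref{laxconstraind}, so your argument in effect supplies the proof the paper omits. Your route --- expanding $\frac{\Lambda^{-1}}{1-\Lambda^{-1}}=\sum_{i\geq 1}\Lambda^{-i}$ and $\frac{1}{1-\Lambda}=\sum_{i\geq 0}\Lambda^{i}$, commuting coefficient functions past shifts via $\Lambda^{k}h(n)=h(n+k)\Lambda^{k}$, and matching the coefficient of each fixed power --- is the natural direct verification and is fully consistent with the conventions the paper does fix: the adjoint $\Lambda^{*}=\Lambda^{-1}$, $f^{*}=f$ (so $B^{*}(g)(n)=\sum_{k\geq 0}b_k(n-k)g(n-k)$ and $C^{*}(g)(n)=\sum_{k\geq 1}c_k(n+k)g(n+k)$, exactly as you record), and the assignment of the $\Lambda^{0}$ term to the $(\,\cdot\,)_{+}$ projection, as in the residue formulas \eqref{Pnegativepart}. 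You also correctly identify the one substantive point, namely that $k\geq 0$ for $B$ (resp.\ $k\geq 1$ for $C$) makes the range of $k$ contributing to each fixed $\Lambda^{-m}$, $m\geq 1$ (resp.\ $\Lambda^{m}$, $m\geq 0$) independent of $m$, which is what lets the coefficient factor as $B(f)(n)\,g(n-m)$ and its three analogues; your counterfactual remark that negative powers in $B$ would make the lower limit $k\geq 1-m$ depend on $m$ is also right. Two cosmetic slips to fix in a write-up: in the $C$ cases the extraction condition should read $i-k\geq 0$, not $k-i\geq 0$ (your parenthetical ``$i\geq k$'' is the correct version), and the phrase ``the coefficient \ldots\ is independent of $m$'' should say that the $f$-dependent factor $\sum_{k}b_k(n)f(n+k)=B(f)(n)$ is independent of $m$, since the full coefficient of course still carries $g(n\mp m)$. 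Neither affects the validity of the argument.
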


Basing on the Proposition \ref{+-},  two constrained Lax operators \eqref{constraintL} and \eqref{constraintbarL} of the two-lattice field constrained hierarchies are compatible with its corresponding Lax equations. This is included in the following proposition.
\begin{proposition}
The Lax equations in the eq.\eqref{laxconstraind} of the two-lattice field constrained hierarchies are compatible with the reduction on two Lax equations \eqref{constraintL} and \eqref{constraintbarL}.

\end{proposition}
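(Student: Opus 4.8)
The plan is to verify directly that, once the wave functions evolve through the Sato equations \eqref{satocons} and the zeroth-order coefficient $u$ evolves accordingly, the right-hand side $[\L^n_+,\L]$ of \eqref{laxconstraind} reproduces the $x_n$-derivative of the constrained operator \eqref{constraintL} and keeps it within the constrained class. I would write $\L=(\Lambda+u)+\Theta$ with the ghost tail $\Theta:=w(n,x)\frac{\Lambda^{-1}}{1-\Lambda^{-1}}w^*(n,x)$, which is a negative shift operator, and check the negative and non-negative components of the Lax equation separately, so that ``compatible with the reduction'' becomes the statement that these two components match.

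For the negative part I would differentiate $\Theta$ by the Leibniz rule and substitute the Sato equations $\d_{x_n}w=\L^n_+w$ and $\d_{x_n}w^*=-(\L^n_+)^*w^*$, obtaining $\d_{x_n}\Theta=(\L^n_+w)\frac{\Lambda^{-1}}{1-\Lambda^{-1}}w^*-w\frac{\Lambda^{-1}}{1-\Lambda^{-1}}(\L^n_+)^*(w^*)$. Since $\L^n_+$ is a non-negative shift operator, Proposition \ref{+-} identifies the two terms on the right as $(\L^n_+\Theta)_-$ and $(\Theta\L^n_+)_-$ respectively, so that $\d_{x_n}\Theta=[\L^n_+,\Theta]_-$. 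On the other hand $[\L^n_+,\Lambda+u]$ is a non-negative operator, whence the negative part of the full commutator satisfies $([\L^n_+,\L])_-=[\L^n_+,\Theta]_-$. The two negative parts therefore agree, which is the heart of the claim.

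For the non-negative part I would use $[\L^n,\L]=0$ to rewrite $[\L^n_+,\L]=-[\L^n_-,\L]$; because $\L^n_-$ is a negative shift operator and $\Theta$ has degree $\le-1$, the commutator $[\L^n_-,\Theta]$ has degree $\le-2$, so the non-negative part of $[\L^n_+,\L]$ comes only from $-[\L^n_-,\Lambda+u]$, which has degree $\le0$. Thus $([\L^n_+,\L])_{\ge0}$ is a single zeroth-order term: the coefficient of $\Lambda$ is unaltered, $\L$ stays of the constrained form \eqref{constraintL}, and the degree-zero part defines $\d_{x_n}u$. This settles compatibility for $\L$. The argument for $\bar\L$ is entirely parallel: writing $\bar\L=v\Lambda^{-1}+\bar\Theta$ with $\bar\Theta:=\bar w\frac{1}{1-\Lambda}\bar w^*$ a non-negative shift operator, I would differentiate $\bar\Theta$ using \eqref{satocons2} and apply the second pair of identities in Proposition \ref{+-}, now with the negative shift operator $\bar\L^n_-$, to get $\d_{y_n}\bar\Theta=[-\bar\L^n_-,\bar\Theta]_+$, matching the positive part of $[-\bar\L^n_-,\bar\L]$, while a symmetric degree count shows the negative part reduces to the single term $\d_{y_n}(v)\Lambda^{-1}$.

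The main obstacle is exactly the negative-part matching: it is not obvious a priori that differentiating the nonlocal ghost tail along the Sato flows reproduces a commutator of Lax type. This is precisely what Proposition \ref{+-} supplies, by turning the action of $\L^n_+$ (respectively $\bar\L^n_-$) on the factors $w,w^*$ (respectively $\bar w,\bar w^*$) into the projection of an operator product. Once that identification is in hand, the remaining work is routine degree bookkeeping ensuring that no positive-order terms are generated, so that the reduction \eqref{constraintL}--\eqref{constraintbarL} is genuinely preserved by the flows.
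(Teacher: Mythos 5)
Your proof is correct and follows essentially the same route as the paper: you differentiate the ghost tail along the Sato flows \eqref{satocons}--\eqref{satocons2} and invoke Proposition \ref{+-} to identify the result with $[\L^n_+,\L]_-$ (respectively $[-\bar \L^n_-,\bar \L]_+$), which is exactly the paper's computation of $\d_{x_n}\L_-$ and $\d_{y_n}\bar \L_+$. The only difference is that you make explicit the complementary degree bookkeeping (showing via $[\L^n,\L]=0$ that the remaining part of the commutator is a single term of degree $0$, respectively degree $-1$, which merely fixes $\d_{x_n}u$ and $\d_{y_n}v$ and preserves the constrained form), a step the paper compresses into ``the compatibility can be seen easily.''
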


\begin{proof}
Using Sato equations \eqref{satowl} and \eqref{satowr}, the following calculations directly lead to the compatibility with the projections of both sides of eqs.\eqref{laxconstraind}
\[\d_{x_n}\L(n,x)_-&=&\d_{x_n}(w(n,x)
\frac{\Lambda^{-1}}{1-\Lambda^{-1}}w^*(n,x))\\
&=&(B_nw(n,x))
\frac{\Lambda^{-1}}{1-\Lambda^{-1}}w^*(n,x)\\
&&-w(n,x)
\frac{\Lambda^{-1}}{1-\Lambda^{-1}}(B_n^*w^*(n,x))\\
&=&[B_n,w(n,x)
\frac{\Lambda^{-1}}{1-\Lambda^{-1}}w^*(n,x)]_-\\
&=&[B_n,\L]_-,\]

\[\d_{y_n}\bar \L(n,y)_+&=&\d_{y_n}(\bar w(n,y)
\frac{1}{1-\Lambda}\bar w^*(n,y))\\
&=&(-\bar \L^n_{-}\bar w(n,y))
\frac{1}{1-\Lambda}\bar w^*(n,y)\\
&&+\bar w(n,y)
\frac{1}{1-\Lambda}(\bar \L^{n*}_{-}\bar w^*(n,y))\\
&=&[-\bar \L^n_{-},\bar w(n,y)
\frac{1}{1-\Lambda}\bar w^*(n,y)]_+\\
&=&[-\bar \L^n_{-},\bar \L]_+.\]
Then the compatibility can be seen easily.
\end{proof}
Comparing with the two dimensional Toda lattice hierarchy, we need to do the following remark on the lattice-field equations.
\begin{remark}
There is an obstacle to define the corresponding constrained two dimensional Toda lattice hierarchy by symmetry constraints eq.\eqref{constraintL} and eq.\eqref{constraintbarL} because one can not define the crossing derivatives of $x_n$ and $y_n$ because of a contradiction with Sato equations. In another word, there does not exist a corresponding constrained two-dimensional Toda hierarchy defined by symmetry constraints eq.\eqref{constraintL} and eq.\eqref{constraintbarL} similarly as the constrained KP hierarchy \cite{chengyiliyishenpla1991,chengyiliyishenjmp1995}.
\end{remark}

Basing on the symmetry constraint of the hierarchy, we choose the first constrained flow equations in  equations \eqref{satocons} and \eqref{satocons2} as an example here.

\subsection{Constrained lattice-field equations}
The $x_1,y_1$ flow equations in  equations \eqref{satocons} and \eqref{satocons2} are the following three-components coupled systems
\begin{align}\left\{
\begin{aligned}
\d_{x_1}w(n)&=w(n+1)+u(n)w(n),\\
\ \d_{x_1}w^*(n)&=-w^*(n-1)-u(n)w^*(n),\ \\
\d_{x_1}u(n)&=w(n+1)w^*(n)-w(n)w^*(n-1),\end{aligned}\right.
\end{align}
\begin{align}\left\{
\begin{aligned}
\d_{y_1}\bar w(n)&=v(n)\bar w(n-1),\\
\d_{y_1}\bar w^*(n)&=-v(n+1)\bar w^*(n+1),\ \\
 \d_{y_1}v(n)&=v(n)(\bar w(n)\bar w^*(n)-\bar w(n-1)\bar w^*(n-1)).\end{aligned}\right.
\end{align}
The $x_2,y_2$ flow equations are as
\begin{align}\left\{
\begin{aligned}\notag
\d_{x_2}w(n)&=[\La^2+(u(n)+u(n+1))\La +u^2(n)+w(n)w^*(n-1)+w(n+1)w^*(n)]w(n),\\ \notag
 \d_{x_2}w^*(n)&=-[\La^{-2}+\La^{-1}(u(n)+u(n+1))+u^2(n)+w(n)w^*(n-1)+w(n+1)w^*(n)]\\
 &\ \ \ \ \ w^*(n),\\ \notag
 \d_{x_2}u(n)&=(\La^{2}-1)w(n)w^*(n-2)+(\La -1)(u(n-1)+u(n))(w(n)w^*(n-1)),\end{aligned}\right.
\end{align}
\begin{align}\left\{
\begin{aligned}
\d_{y_2}\bar w(n)&=-[v(n)v(n-1)\La^{-2}+v(n)(\bar w(n-1)\bar w^*(n-1)+\bar w(n)\bar w^*(n))\La^{-1}]\bar w(n),\\
 \d_{y_2}\bar w^*(n)&=[(\La^{2}v(n)v(n-1)+\La v(n)(\bar w(n-1)\bar w^*(n-1)+\bar w(n)\bar w^*(n)))]\bar w^*(n),\\
 \d_{y_2}v(n)&=[\bar w(n)\bar w^*(n+1)v(n+1)v(n)-v(n)v(n-1)\La^{-2}\bar w(n)\bar w^*(n+1)\\ \notag
 &+ v(n)(\bar w(n-1)\bar w^*(n-1)+\bar w(n)\bar w^*(n))(\bar w(n)\bar w^*(n)-\bar w(n-1)\bar w^*(n-1))]v(n).\end{aligned}\right.
\end{align}
After denoting $w(n),w^*(n),u(n),\bar w(n),\bar w^*(n),v(n)$ as $q_{n},q^*_{n},u_{n},r_{n},r^*_{n},v_{n}$, then the  $x_1,y_1$ flow equations are as
the following simplified form
\begin{align}\left\{
\begin{aligned}
\d_{x_1}q_{n}&=q_{n+1}+u_{n}q_{n},\\
\ \d_{x_1}q^*_{n}&=-q^*_{n-1}-u_nq^*_{n},\ \\
\d_{x_1}u_{n}&=q_{n+1}q^*_{n}-q_{n}q^*_{n-1},\end{aligned}\right.
\end{align}
\begin{align}\left\{
\begin{aligned}
\d_{y_1}r_{n}&=v_{n}r_{n-1},\\
\d_{y_1}r^*_{n}&=-v_{n+1}r^*_{n+1},\ \\
 \d_{y_1}v_{n}&=v_{n}(r_{n}r^*_{n}-r_{n-1}r^*_{n-1}).\end{aligned}\right.
\end{align}
The  $x_2,y_2$ flow equations are as
the following simplified form
\begin{align}\left\{
\begin{aligned}\notag
\d_{x_2}q_n&=q_{n+2}+(u_{n}+u_{n+1})q_{n+1} +u^2_{n}q_{n}+q_{n}^2q^*_{n-1}+q_{n+1}q_{n}q^*_{n},\\ \notag
 \d_{x_2}q^*_{n}&=-q^*_{n-2}-(u_{n-1}+u_{n})q^*_{n-1}-u^2_{n}q^*_{n}-q_{n}q^*_{n-1}q^*_{n}-q_{n+1}q^{*2}_{n},\\ \notag
 \d_{x_2}u_{n}&=q_{n+2}q^*_{n}-q_{n}q^*_{n-2}+(u_{n+1}+u_{n})(q_{n+1}q^*_{n})-(u_{n-1}+u_{n})(q_{n}q^*_{n-1}),\end{aligned}\right.
\end{align}
\begin{align}\left\{
\begin{aligned}
\d_{y_2}r_n&=-v_{n}v_{n-1}r_{n-2}-v_{n}(r_{n-1}r^*_{n-1}+r_{n}r^*_{n})r_{n-1},\\
 \d_{y_2}r^*_n&=v_{n+2}v_{n+1}r^*_{n+2}+ v_{n+1}(r_{n}r^*_{n}+r_{n+1}r^*_{n+1})r^*_{n+1},\\
 \d_{y_2}v_n&=r_{n}r^*_{n+1}v_{n+1}v_{n}^2-v_{n}v_{n-1}r_{n-2}r^*_{n-1}v_{n-2}+ v_{n}^2(r_{n}^2r^{*2}_{n}-r_{n-1}^2r^{*2}_{n-1}).\end{aligned}\right.
\end{align}

\section{Additional Block symmetry of Toda hierarchy}
Here we will do a further reduction over the Lax operators \eqref{constraintL} and  \eqref{constraintbarL} by letting them depend on the same time variables $t_n$ with $t_n=x_n=y_n$. Then one can derive the constraint over the Lax  operator $ \hat \L$ of the Toda hierarchy, i.e.
\[\hat \L=\L=\bar \L=L=\bar L.\]
Under this reduction, we denote reduced dressing operators $\P,\bar \P$ as $S,\bar S$ which have expansions of the form
\begin{gather}
\label{expansion-S}
\begin{aligned}
S&=1+\omega_1(n)\Lambda^{-1}+\omega_2(n)\Lambda^{-2}+\cdots,\\
\bar S&=\bar\omega_0(n)+\bar\omega_1(n)\Lambda+\bar\omega_2(n)\Lambda^{2}+\cdots.
\end{aligned}
\end{gather}

The inverse operators $S^{-1},\bar S^{-1}$ of operators $S,\bar S$ have expansions of the form
\begin{gather}
\begin{aligned}
S^{-1}&=1+\omega'_1(n)\Lambda^{-1}+\omega'_2(n)\Lambda^{-2}+\cdots,\\
\bar S^{-1}&=\bar\omega'_0(n)+\bar\omega'_1(n)\Lambda+\bar\omega'_2(n)\Lambda^{2}+\cdots.
\end{aligned}
\end{gather}

 The Lax  operator $ \hat \L$ of the Toda hierarchy
has the following expansions
\begin{gather}\label{lax expansion}
\begin{aligned}
 \hat \L&=\Lambda+U(n)+V(n)\Lambda^{-1}.
\end{aligned}
\end{gather}
 In fact the Lax  operator $\hat \L$
 can also be equivalently defined by
\begin{align}
\label{two dressing} \hat \L&:=S\circ\Lambda\circ S^{-1}=\bar S\circ\Lambda^{-1}\circ \bar S^{-1}.
\end{align}

In this section we will use  a convenient notation on the operators $\hat B_{j}$ defined as follows
\begin{align}\label{satoS}
\begin{aligned}
\hat B_{j}&:=\frac{\hat \L^j}{j!}.
\end{aligned}
\end{align}

Now  the known definition of the  Toda hierarchy is as following.
\begin{definition}The  Toda hierarchy is a hierarchy in which the dressing operators $S,\bar S$ satisfy the following Sato equations \cite{UT}
\begin{align}
\label{satoSt} \partial_{t_{j}}S&=-(\hat B_{j})_-S,& \partial_{t_{j}}\bar S&=(\hat B_{j})_+\bar S.\end{align}
\end{definition}
 From the previous, one can derive the following well-known Lax equations  of the Toda Hierarchy are as follows \cite{UT}
   \begin{align}
\label{laxtjk}
  \partial_{t_{j}} \hat \L&= [(\hat B_{j})_+,\hat \L].
  \end{align}

We now put the constraint
eq.\eqref{two dressing} into a construction of the flows of additional
symmetries which form the well-known Block algebra.
With the dressing operators given in the eq.\eqref{two dressing}, we introduce Orlov-Schulman operators as following
\begin{eqnarray}\label{Moperator}
&&M=S\Gamma S^{-1}, \ \ \bar M=\bar S\bar \Gamma \bar S^{-1},\ \\
 &&\Gamma=
n\Lambda^{-1}+\sum_{n\geq 0}
(n+1)\Lambda^{n}t_{n},\ \bar \Gamma=
-n\Lambda.
\end{eqnarray}

Then one can prove the Lax operator $\hat \L$ and Orlov-Schulman operators $M,\bar M$ satisfy the following theorem.
\begin{proposition}\label{flowsofM}
The Lax operator $\hat \L$ and Orlov-Schulman operators $M,\bar M$ of the Toda hierarchy
satisfy the following
\begin{eqnarray}
&[ \hat \L,M]=1,\ \ [ \hat \L,\bar M]=1,\\ \label{Mequation}
&\partial_{ t_{n}}M=
[(B_{n})_+,M],\ \ \partial_{ t_{n}}\bar M=[(B_{n})_+,\bar M],\\
&\dfrac{\partial
M^m\hat \L^k}{\partial{t_{n}}}=[(B_{n})_+,
M^m\hat \L^k],\;  \dfrac{\partial
\bar M^m\hat \L^k}{\partial{t_{n}}}=[(B_{n})_+, \bar M^m\hat \L^k].
\end{eqnarray}

\end{proposition}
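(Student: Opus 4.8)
The plan is to derive all three families of identities from the two dressing representations $\hat \L=S\Lambda S^{-1}=\bar S\Lambda^{-1}\bar S^{-1}$, the definitions $M=S\Gamma S^{-1}$ and $\bar M=\bar S\bar\Gamma\bar S^{-1}$, and the Sato equations \eqref{satoSt} together with the Lax equation \eqref{laxtjk}. For the two commutation relations I would conjugate back to the bare operators: since conjugation by $S$ (resp.\ $\bar S$) is an algebra homomorphism,
\begin{equation*}
[\hat\L,M]=S[\Lambda,\Gamma]S^{-1},\qquad [\hat\L,\bar M]=\bar S[\Lambda^{-1},\bar\Gamma]\bar S^{-1}.
\end{equation*}
It then suffices to evaluate the undressed brackets using the shift rule $\Lambda f(n)\Lambda^{-1}=f(n+1)$, equivalently $\Lambda^{j}n\Lambda^{-j}=n+j$. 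A direct computation gives $[\Lambda,n\Lambda^{-1}]=1$, and since the remaining terms of $\Gamma$ are constant-coefficient (in $n$) monomials in $\Lambda$ they drop out, so $[\Lambda,\Gamma]=1$; likewise $[\Lambda^{-1},-n\Lambda]=1$, hence $[\Lambda^{-1},\bar\Gamma]=1$. Dressing back yields $[\hat\L,M]=[\hat\L,\bar M]=1$.

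For the flow of $M$ I would differentiate $M=S\Gamma S^{-1}$ by the product rule and insert $\partial_{t_{j}}S=-(\hat B_{j})_-S$ together with the induced relation $\partial_{t_{j}}S^{-1}=S^{-1}(\hat B_{j})_-$, which gives
\begin{equation*}
\partial_{t_{j}}M=-[(\hat B_{j})_-,M]+S(\partial_{t_{j}}\Gamma)S^{-1}.
\end{equation*}
The crux is to identify $S(\partial_{t_{j}}\Gamma)S^{-1}$ with $[\hat B_{j},M]$; since $\hat B_{j}=S(\Lambda^{j}/j!)S^{-1}$ this amounts to the purely algebraic \emph{bare flow} identity $\partial_{t_{j}}\Gamma=[\,\Lambda^{j}/j!,\,\Gamma\,]$, which I would check by differentiating the explicit $\Gamma$ in $t_{j}$ and matching it term by term against the commutator evaluated with the shift rule. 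Granting this, and writing $\hat B_{j}=(\hat B_{j})_++(\hat B_{j})_-$,
\begin{equation*}
\partial_{t_{j}}M=-[(\hat B_{j})_-,M]+[\hat B_{j},M]=[(\hat B_{j})_+,M].
\end{equation*}
The companion identity for $\bar M$ is easier because $\bar\Gamma=-n\Lambda$ has no explicit $t$-dependence: differentiating $\bar M=\bar S\bar\Gamma\bar S^{-1}$ and inserting $\partial_{t_{j}}\bar S=(\hat B_{j})_+\bar S$ and $\partial_{t_{j}}\bar S^{-1}=-\bar S^{-1}(\hat B_{j})_+$ collapses directly to $[(\hat B_{j})_+,\bar M]$, with no bare-flow input needed.

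The identities for $M^{m}\hat\L^{k}$ and $\bar M^{m}\hat\L^{k}$ then follow formally, using in addition the Lax equation $\partial_{t_{j}}\hat\L=[(\hat B_{j})_+,\hat\L]$ of \eqref{laxtjk}. The key observation is that $\mathrm{ad}_{(\hat B_{j})_+}$ is a derivation, so whenever $\partial_{t_{j}}A=[(\hat B_{j})_+,A]$ and $\partial_{t_{j}}B=[(\hat B_{j})_+,B]$ one also has $\partial_{t_{j}}(AB)=[(\hat B_{j})_+,A]B+A[(\hat B_{j})_+,B]=[(\hat B_{j})_+,AB]$. Applying this inductively to the factors $M$ (resp.\ $\bar M$) and $\hat\L$ gives the stated evolution of $M^{m}\hat\L^{k}$ and $\bar M^{m}\hat\L^{k}$.

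I expect the single genuine obstacle to be the bare-flow verification $\partial_{t_{j}}\Gamma=[\Lambda^{j}/j!,\Gamma]$ in the $M$ case: this is the one place where the precise coefficients appearing in $\Gamma$ and the normalization of $\hat B_{j}$ must be pinned down so that the two sides agree, and where the non-commutativity of $\Lambda$ with the multiplication operator $n$ enters essentially through $\Lambda^{j}n\Lambda^{-j}=n+j$. Everything else is bookkeeping: the two commutation relations are a one-line conjugation once the bare brackets are in hand, the $\bar M$ flow needs only the $t$-independence of $\bar\Gamma$, and the product identities are a formal consequence of the derivation property of $\mathrm{ad}_{(\hat B_{j})_+}$.
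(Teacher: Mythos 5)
Your overall strategy coincides with the paper's: the paper's entire proof consists of dressing the two bare brackets $[\partial_{t_n}-\Lambda^{n+1}/(n+1)!,\,\Gamma]=0$ and $[\partial_{t_n},\bar\Gamma]=0$ and saying the remaining identities follow similarly. Your treatment of $[\hat\L,M]=[\hat\L,\bar M]=1$ by conjugating to $[\Lambda,\Gamma]$ and $[\Lambda^{-1},\bar\Gamma]$, your observation that the $\bar M$-flow needs only the $t$-independence of $\bar\Gamma$, and your derivation-property argument for the monomials $M^m\hat\L^k$, $\bar M^m\hat\L^k$ are all correct and supply details the paper omits.

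However, the one step you explicitly defer --- the bare-flow identity $\partial_{t_j}\Gamma=[\Lambda^{j}/j!,\,\Gamma]$ --- is exactly where your version breaks, and not merely in a normalization constant. With the paper's explicit $\Gamma=n\Lambda^{-1}+\sum_{i\geq 0}(i+1)\Lambda^{i}t_i$ one has $\partial_{t_j}\Gamma=(j+1)\Lambda^{j}$, whereas $[\Lambda^{j}/j!,\Gamma]=[\Lambda^{j},n\Lambda^{-1}]/j!=\Lambda^{j-1}/(j-1)!$: the two sides differ in the \emph{power} of $\Lambda$, so no pinning down of coefficients in the given $\Gamma$ can reconcile them. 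This is why the paper's bare bracket uses $\Lambda^{n+1}$ rather than $\Lambda^{n}$ --- the degrees then match, since $[\Lambda^{n+1},n\Lambda^{-1}]=(n+1)\Lambda^{n}=\partial_{t_n}\Gamma$ --- signalling that in the intended indexing the $t_n$-flow is generated by a power $\hat\L^{n+1}$, in tension with the stated definition $\hat B_j=\hat\L^{j}/j!$ in \eqref{satoS}. (Indeed even the paper's own factor $1/(n+1)!$ is inconsistent with its $\Gamma$, since $[\Lambda^{n+1}/(n+1)!,\Gamma]=\Lambda^{n}/n!\neq(n+1)\Lambda^{n}$; the identity holds only without the factorial, or after replacing $\Gamma$ by $n\Lambda^{-1}+\sum_{j\geq 1}\Lambda^{j-1}t_j/(j-1)!$ if one keeps the Sato equations \eqref{satoSt} with $\hat B_j=\hat\L^{j}/j!$, in which case your computation goes through verbatim.) So your proposal is structurally sound and matches the paper's method, but as written it rests the central identity \eqref{Mequation} on an equation that fails under the paper's stated conventions: the verification you postponed is a genuine gap that forces a choice of consistent normalizations, not bookkeeping.
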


\begin{proof}
One can prove the eq.\eqref{Mequation} in this proposition by dressing the following two commutative Lie brackets
\begin{eqnarray*}&&[\partial_{ t_{n}}-\frac{\Lambda^{n+1}}{(n+1)!},\Gamma]=0,\ \ [\partial_{ t_{n}},\bar \Gamma]
=0.
\end{eqnarray*}
The other identities can be proved in a similar way.

\end{proof}
We  now  define the additional flows, and will then prove that they are  additional
symmetries of the Toda hierarchy. We introduce additional
independent variables $t_{m,l}$ and define the actions of the
additional flows on the wave operators as
\begin{eqnarray}\label{definitionadditionalflowsonphi2}
\dfrac{\partial S}{\partial
{t_{m,l}}}=-\left((M-\bar M)^m \hat \L^l\right)_{-}S, \ \ \ \dfrac{\partial
\bar S}{\partial {t_{m,l}}}=\left((M-\bar M)^m \hat \L^l\right)_{+}\bar S,
\end{eqnarray}
where $m\geq 0, l\geq 0$.
By performing the derivative on $\hat \L$ dressed by $S$ and
using the additional flow about $S$ in \eqref{definitionadditionalflowsonphi2}, we get
\begin{eqnarray*}
(\partial_{t_{m,l}} \hat \L)&=& (\partial_{t_{m,l}}S)\ \La S^{-1}
+ S\ \La\ (\partial_{t_{m,l}}S^{-1})\\
&=&-((M-\bar M)^m \hat \L^l)_{-} S\ \La\ S^{-1}- S\ \La
S^{-1}\ (\partial_{t_{m,l}}S)
\ S^{-1}\\
&=&-((M-\bar M)^m \hat \L^l)_{-}  \hat \L+  \hat \L ((M-\bar M)^m\hat \L^l)_{-}\\
&=&-[((M-\bar M)^m \hat \L^l)_{-}, \hat \L].
\end{eqnarray*}
Similarly, we perform the derivative on $\hat \L$ dressed by $\bar S$ and
use the additional flow about $\bar S$ in \eqref{definitionadditionalflowsonphi2} to get the following
\begin{eqnarray*}
(\partial_{t_{m,l}}\hat \L)&=& (\partial_{t_{m,l}}\bar S)\ \La \bar S^{-1}
+ \bar S\ \La\ (\partial_{t_{m,l}}\bar S^{-1})\\
&=&((M-\bar M)^m\hat \L^l)_{+} \bar S\ \La^{-1}\ \bar S^{-1}- \bar S\ \La
\bar S^{-1}\ (\partial_{t_{m,l}}\bar S)
\ \bar S^{-1}\\
&=&((M-\bar M)^m\hat \L^l)_{+} \hat \L- \hat \L ((M-\bar M)^m\hat \L^l)_{+}\\
&=&[((M-\bar M)^m\hat \L^l)_{+},\hat \L].
\end{eqnarray*}
Because
\begin{eqnarray}\label{Toda Hierarchyadditionalflow111.}
[M-\bar M,\hat \L]=0,
\end{eqnarray}
therefore one can further derive the following equation
\begin{eqnarray}\label{Toda Hierarchyadditionalflow1111}
\dfrac{\partial \hat \L}{\partial
{t_{m,l}}}=[-\left((M-\bar M)^m\hat \L^l\right)_{-},
\hat \L]=[\left((M-\bar M)^m\hat \L^l\right)_{+}, \hat \L],
\end{eqnarray}
which gives the compatibility of additional flows of  the Toda hierarchy with the reduction condition \eqref{two dressing}.

By the two propositions above, we can prove
the additional flows $\partial_{t_{m,l}}$ commute
with the $\partial_{t_{n}}$flows of the Toda hierarchy, i.e.,
\begin{eqnarray}
[\partial_{t_{m,l}}, \partial_{t_{n}}]\Phi=0,
\end{eqnarray}
where $\Phi$ can be $S$, $\bar S$ or $\hat \L$,  and
 $
\partial_{t_{m,l}}=\frac{\partial}{\partial{t_{m,l}}},
\partial_{t_{n}}=\frac{\partial}{\partial{t_{n}}}$.
The proof is just a direct calculation and standard. One can check the similar proof in \cite{ourBlock},\cite{2bkpds}-\cite{R}.
This commutative property  means that
additional flows are symmetries of the Toda hierarchy.
Since they are symmetries, it is natural to consider the algebraic
structures among these additional symmetries. So we obtain the following important
theorem.
\begin{theorem}\label{WinfiniteCalgebra}
The additional flows  $\partial_{t_{m,l}}$ of the Toda hierarchy form a Block type Lie algebra as
 \begin{eqnarray}\label{algebra relation}
[\partial_{t_{m,l}},\partial_{t_{n,k}}]= (km-n l)\d_{t_{m+n-1,k+l-1}},
\end{eqnarray}
which holds true in the sense of acting on  $S$, $\bar S$ or $\hat \L$ and  $m,n,l,k\geq 0.$
\end{theorem}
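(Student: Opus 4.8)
The plan is to show that the additional flows $\partial_{t_{m,l}}$, defined in \eqref{definitionadditionalflowsonphi2} via the operator $(M-\bar M)^m\hat\L^l$, close under commutation into the Block Lie algebra with structure constants $km-nl$. The whole computation rests on two structural facts already established: first, the key relation \eqref{Toda Hierarchyadditionalflow111.}, namely $[M-\bar M,\hat\L]=0$, which says that $M-\bar M$ and $\hat\L$ generate a \emph{commutative} algebra of operators; and second, from Proposition~\ref{flowsofM}, the relations $[\hat\L,M]=[\hat\L,\bar M]=1$. Writing $A:=M-\bar M$ and noting $[\hat\L,A]=[\hat\L,M]-[\hat\L,\bar M]=1-1=0$ confirms $A$ commutes with $\hat\L$, so monomials $A^m\hat\L^l$ multiply like commuting symbols: $(A^m\hat\L^l)(A^n\hat\L^k)=A^{m+n}\hat\L^{l+k}$. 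This is what ultimately produces the simple two-index Block bracket rather than a more complicated $W$-type bracket.

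First I would verify the action on $S$ and defer the $\bar S$ and $\hat\L$ cases, since they follow by the identical mechanism. The standard technique is to compute $[\partial_{t_{m,l}},\partial_{t_{n,k}}]S$ by applying one flow to the defining equation of the other. Using $\partial_{t_{m,l}}S=-\left(A^m\hat\L^l\right)_-S$, I would expand
\begin{eqnarray*}
[\partial_{t_{m,l}},\partial_{t_{n,k}}]S
&=&-\partial_{t_{m,l}}\!\left(A^n\hat\L^k\right)_- S
+\left(A^n\hat\L^k\right)_-\left(A^m\hat\L^l\right)_- S\\
&&+\,\partial_{t_{n,k}}\!\left(A^m\hat\L^l\right)_- S
-\left(A^m\hat\L^l\right)_-\left(A^n\hat\L^k\right)_- S.
\end{eqnarray*}
The crux is then to evaluate $\partial_{t_{m,l}}(A^n\hat\L^k)$. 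Since $\partial_{t_{m,l}}$ acts on $\hat\L$ as in \eqref{Toda Hierarchyadditionalflow1111} and one can derive the analogous dressed action on $M$ and $\bar M$ (the same reasoning as Proposition~\ref{flowsofM}, replacing $\partial_{t_n}$ by $\partial_{t_{m,l}}$), the operator $A^n\hat\L^k$ transforms as $\partial_{t_{m,l}}(A^n\hat\L^k)=[-(A^m\hat\L^l)_-,A^n\hat\L^k]$. Substituting this and collecting terms reduces the whole expression to the commutator of the negative projections, so the problem becomes the purely algebraic identity for the $(\;)_-$ projection applied to commuting operators.

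The \textbf{main obstacle}, and the only place where the specific value $km-nl$ emerges, is the projection lemma: one must show that the combined effect of the cross-terms yields
\[
\left[-(A^m\hat\L^l)_-,\,A^n\hat\L^k\right]_- - \left[-(A^n\hat\L^k)_-,\,A^m\hat\L^l\right]_-
= (km-nl)\,(A^{m+n-1}\hat\L^{k+l-1})_-,
\]
so that the right-hand side is exactly $(km-nl)\,\partial_{t_{m+n-1,k+l-1}}S$. Here the factor $km-nl$ is produced by differentiating the commuting symbols: using $[\hat\L,A]=0$ together with $[\hat\L,M]=1$, one finds $[A^m\hat\L^l,A^n\hat\L^k]$ vanishes as full operators but the projections do not commute, and Leibniz-type counting of how $\hat\L^l$ acts against $A^n$ versus $\hat\L^k$ against $A^m$ gives the antisymmetric coefficient $km-nl$. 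This is the delicate bookkeeping step. Once it is in hand, I would record that the identical computation with $\partial_{t_{m,l}}\bar S=(A^m\hat\L^l)_+\bar S$ gives the same bracket via the $(\;)_+$ projection (the signs conspire because $P_++P_-=P$), and that acting on $\hat\L$ follows from its dressed transformation law \eqref{Toda Hierarchyadditionalflow1111}; this establishes that the relation \eqref{algebra relation} holds simultaneously on $S$, $\bar S$ and $\hat\L$, completing the proof. I would finally remark that this computation is standard and parallels the arguments in \cite{ourBlock} and \cite{2bkpds}-\cite{R}.
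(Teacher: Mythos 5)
Your overall skeleton (expand $[\partial_{t_{m,l}},\partial_{t_{n,k}}]S$ via the Sato-type definition, reduce to an operator identity, antisymmetrize) is the same as the paper's, but the two steps you single out as the crux are both incorrect, and with them the bracket would come out to be \emph{zero}, not $(km-nl)\partial_{t_{m+n-1,k+l-1}}$. Write $A=M-\bar M$ and $B_{m,l}=A^m\hat\L^l$. Your claimed transformation law $\partial_{t_{m,l}}B_{n,k}=[-(B_{m,l})_-,B_{n,k}]$ fails, because $M$ and $\bar M$ are dressed by \emph{different} operators: since $\Gamma$ and $\bar\Gamma$ do not depend on the additional times, one gets $\partial_{t_{m,l}}M=[-(B_{m,l})_-,M]$ but $\partial_{t_{m,l}}\bar M=[\,(B_{m,l})_+,\bar M]$, with the $(\ )_+$ projection. (Your appeal to Proposition \ref{flowsofM} is a misreading: there both $M$ and $\bar M$ flow with the same $(B_n)_+$ only because $\Gamma$ depends explicitly on the ordinary times $t_n$, as the paper's proof via $[\partial_{t_n}-\Lambda^{n+1}/(n+1)!,\Gamma]=0$ shows; no such compensation exists for $t_{m,l}$.) Consequently
\begin{align*}
\partial_{t_{m,l}}(M-\bar M)&=[-(B_{m,l})_-,\,M-\bar M]-[B_{m,l},\bar M],\\
[B_{m,l},\bar M]&=l\,A^m\hat\L^{l-1}+[A^m,\bar M]\,\hat\L^l,
\end{align*}
using $[\hat\L,\bar M]=1$. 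It is precisely this inhomogeneous term that the paper's Leibniz expansion through $\sum_p(M-\bar M)^p(\partial_{t_{m,l}}(M-\bar M))(M-\bar M)^{n-p-1}\hat\L^k$ retains: the $-l\,A^m\hat\L^{l-1}$ pieces contribute $-nl\,B_{m+n-1,k+l-1}$ and, after antisymmetrizing in $(m,l)\leftrightarrow(n,k)$, give $(km-nl)B_{m+n-1,k+l-1}$, while the $[A^m,\bar M]$ pieces cancel identically by symmetry of the double sum. That is the entire content of the paper's ``tedious calculation.''

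Your ``projection lemma'' is moreover provably false, and this is where the attempt would collapse. Since $[\hat\L,A]=0$ gives $[B_{m,l},B_{n,k}]=0$, one has the identity $[(B_{m,l})_-,B_{n,k}]-[(B_{n,k})_-,B_{m,l}]=-[(B_{m,l})_+,(B_{n,k})_+]+[(B_{m,l})_-,(B_{n,k})_-]$, whose negative part is exactly $[(B_{m,l})_-,(B_{n,k})_-]$; in the expansion of $[\partial_{t_{m,l}},\partial_{t_{n,k}}]S$ this cancels against the direct term $-[(B_{m,l})_-,(B_{n,k})_-]S$ produced by the two flows hitting $S$. So under your assumed transformation law the additional flows would \emph{commute}: projections of mutually commuting operators cannot manufacture the derivative-type coefficient $km-nl$, contrary to your ``Leibniz-type counting'' heuristic. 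The Block structure constants come not from the cut-offs $(\ )_\pm$ but from the $+/-$ mismatch in how $\bar M$ versus $M$ evolves, i.e.\ from the full (unprojected) commutator $[B_{m,l},\bar M]$. Once you replace your transformation law by the corrected flow of $M-\bar M$ displayed above, the rest of your outline (the treatment of $\bar S$ via $(\ )_+$ and of $\hat\L$ via \eqref{Toda Hierarchyadditionalflow1111}) goes through and reproduces the paper's proof.
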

\begin{proof}
 By using
 (\ref{definitionadditionalflowsonphi2}), we get
\begin{eqnarray*}
[\partial_{t_{m,l}},\partial_{t_{n,k}}]S&=&
\partial_{t_{m,l}}(\partial_{t_{n,k}}S)-
\partial_{t_{n,k}}(\partial_{t_{m,l}}S)\\
&=&-\partial_{t_{m,l}}\left(((M-\bar M)^n\hat \L^k)_{-}S\right)
+\partial_{t_{n,k}} \left(((M-\bar M)^m\hat \L^l)_{-}S\right)\\
&=&-(\partial_{t_{m,l}}
(M-\bar M)^n\hat \L^k)_{-}S-((M-\bar M)^n\hat \L^k)_{-}(\partial_{t_{m,l}} S)\\
&&+ (\partial_{t_{n,k}} (M-\bar M)^m\hat \L^l)_{-}S+
((M-\bar M)^m\hat \L^l)_{-}(\partial_{t_{n,k}} S).
\end{eqnarray*}
Then by a tedious calculation, one can further get
 \begin{eqnarray*}&&
[\partial_{t_{m,l}},\partial_{t_{n,k}}]S\\
&=&-\Big[\sum_{p=0}^{n-1}
(M-\bar M)^p(\partial_{t_{m,l}}(M-\bar M))(M-\bar M)^{n-p-1}\hat \L^k
+(M-\bar M)^n(\partial_{t_{m,l}}\hat \L^k)\Big]_{-}S\\&&-((M-\bar M)^n\hat \L^k)_{-}(\partial_{t_{m,l}} S)\\
&&+\Big[\sum_{p=0}^{m-1}
(M-\bar M)^p(\partial_{t_{n,k}}(M-\bar M))(M-\bar M)^{m-p-1}\hat \L^l
+(M-\bar M)^m(\partial_{t_{n,k}}\hat \L^l)\Big]_{-}S\\&&+
((M-\bar M)^m\hat \L^l)_{-}(\partial_{t_{n,k}} S)\\
&=&[(nl-km)(M-\bar M)^{m+n-1}\hat \L^{k+l-1}]_-S\\
&=&(km-nl)\d_{t_{m+n-1,k+l-1}}S.
\end{eqnarray*}
Similarly  the same results on $\bar S$ and $\hat \L$ are as follows
 \begin{eqnarray*}
[\partial_{t_{m,l}},\partial_{t_{n,k}}]\bar S
&=&((km-nl)(M-\bar M)^{m+n-1}\hat \L^{k+l-1})_+\bar S\\
&=&(km-nl)\d_{t_{m+n-1,k+l-1}}\bar S,
\\[6pt]
{}[\partial_{t_{m,l}},\partial_{t_{n,k}}]\hat \L&=&
\partial_{t_{m,l}}(\partial_{t_{n,k}}\hat \L)-
\partial_{t_{n,k}}(\partial_{t_{m,l}}\hat \L)\\
&=&[((nl-km)(M-\bar M)^{m+n-1}\hat \L^{k+l-1})_-, \hat \L]\\
&=&(km-nl)\d_{t_{m+n-1,k+l-1}}\hat \L.
\end{eqnarray*}
Now the theorem is proved.
\end{proof}

{\bf {Acknowledgements:}}
   This work is supported by the Zhejiang Provincial Natural Science Foundation of China under Grant No. LY15A010004,  National Natural Science Foundation of China under Grant No. 11201251, 11571192,   the Natural Science Foundation of Ningbo under Grant No.  2015A610157 and K. C. Wong Magna Fund in
Ningbo University.



\end{document}